\titlespacing{\section}{0pt}{*0.8}{*0.8}
\titlespacing{\subsection}{0pt}{*0.8}{*0.8}
\titlespacing{\subsubsection}{0pt}{*0.8}{*0.8}
\newcommand{\bA}{ {\boldsymbol A} }
\newcommand{\bC}{ {\boldsymbol C} }
\newcommand{\bD}{ {\boldsymbol D} }
\newcommand{\bI}{ {\boldsymbol I} }
\newcommand{\bJ}{ {\boldsymbol J} }
\newcommand{\bv}{ {\boldsymbol v} }
\newcommand{\bx}{ {\boldsymbol x} }
\newcommand{\bX}{ {\boldsymbol X} }
\newcommand{\by}{ {\boldsymbol y} }
\newcommand{\bbeta}{ {\boldsymbol \beta} }
\newcommand{\bDelta}{ {\boldsymbol \Delta} }
\newcommand{\bepsilon}{ {\boldsymbol \epsilon} }
\newcommand{\bPhi}{ {\boldsymbol \Phi} }
\newcommand{\blambda}{ {\boldsymbol \lambda} }
\newcommand{\bSigma}{ {\boldsymbol \Sigma} }
\newcommand{\bzeta}{ {\boldsymbol \zeta} }
\newcommand{\bxi}{ {\boldsymbol \xi} }
\newcommand{\bzero}{ {\boldsymbol 0} }
\newtheorem{theorem}{Theorem}[section]
\newtheorem{lemma}[theorem]{Lemma}
\newenvironment{proof}[1][Proof]{\begin{trivlist}
\item[\hskip \labelsep {\bfseries #1}]}{\end{trivlist}}
\newcommand{\qed}{\nobreak \ifvmode \relax \else
      \ifdim\lastskip<1.5em \hskip-\lastskip
      \hskip1.5em plus0em minus0.5em \fi \nobreak
      \vrule height0.75em width0.5em depth0.25em\fi}
\title{Sketching in Bayesian High Dimensional Regression With Big Data Using Gaussian Scale Mixture Priors}
\author{{\small Rajarshi Guhaniyogi}\\
{\small Associate Professor, Department of Statistics,}\\ {\small UC Santa Cruz, 1156 High Street, Santa Cruz, CA 95064, E-mail: rguhaniy@ucsc.edu}\\
{\small Aaron Scheffler}\\
{\small Assistant Professor, Department of Epidemiology \& Biostatistics,}\\
{\small UC San Francisco, 550 16th. Street
San Francisco CA 94158, E-mail: Aaron.Scheffler@ucsf.edu }}
\begin{document}
\maketitle
\begin{abstract}
Bayesian computation of high dimensional linear regression models with a popular Gaussian scale mixture prior distribution using Markov Chain Monte Carlo (MCMC) or its variants can be extremely slow or completely prohibitive due to the heavy computational cost that grows in the order of $p^3$, with $p$ as the number of features. Although a few recently developed algorithms make the computation efficient in presence of a small to moderately large sample size (with the complexity growing in the order of $n^3$), the computation becomes intractable when sample size $n$ is also large. In this article we adopt the data sketching approach to compress the $n$ original samples by a random linear transformation to $m<<n$ samples in $p$ dimensions, and compute Bayesian regression with Gaussian scale mixture prior distributions with the randomly compressed response vector and feature matrix. Our proposed approach yields computational complexity growing in the cubic order of $m$. Another important motivation for this compression procedure is that it anonymizes the data by revealing little information about the original data in the course of analysis. Our detailed empirical investigation with the Horseshoe prior from the class of Gaussian scale mixture priors shows closely similar inference and a massive reduction in per iteration computation time of the proposed approach compared to the regression with the full sample. One notable contribution of this article is to derive posterior contraction rate for high dimensional predictor coefficient with a general class of shrinkage priors on them under data compression/sketching. In particular, we characterize the dimension of the compressed response vector $m$ as a function of the sample size, number of predictors and sparsity in the regression to guarantee accurate estimation of predictor coefficients asymptotically, even after data compression.
\end{abstract}
\noindent\emph{Keywords:} Bayesian inference, Gaussian scale mixture priors, High dimensional linear regression, Posterior convergence, Random compression matrix, Sketching.
\section{Introduction}
Of late, due to the technological advances in a variety of disciplines, we routinely encounter data with a large number of predictors. In such settings, it is commonly of interest to consider the high dimensional linear regression model
\begin{align}\label{eq:GLM}
y=\bx'\bbeta+\epsilon,
\end{align}
where $\bx$ is a $p\times 1$ feature vector, $\bbeta$ is the corresponding $p\times 1$ coefficient, $y$ is the continuous response and $\epsilon$ is the idiosyncratic error. Bayesian methods for estimating $\bbeta$ broadly employ two classes of prior distributions. The traditional approach is to develop a discrete mixture of prior
distributions \citep{george1997approaches,scott2010bayes}. These methods enjoy the advantage
of inducing exact sparsity for a subset of parameters (allowing some components of $\bbeta$ to be exactly zero a posteriori) and minimax rate of posterior contraction \citep{castillo2015bayesian} in high dimensional regression, but face computational challenges when the number of features is even moderately large. As an alternative to this approach, continuous shrinkage priors \citep{armagan2013generalized,carvalho2010horseshoe,caron2008sparse} have emerged, which can mostly be expressed as global-local scale mixtures of Gaussians \citep{polson2010shrink} given by,
\begin{align}\label{Gauss_scale}
&\beta_j|\lambda_j,\tau,\sigma\sim N(0,\sigma^2\tau^2\lambda_j^2),\:\lambda_j\sim g_1,\:\mbox{for}\: j=1,...,p\nonumber\\
&\qquad\qquad\qquad\qquad\tau\sim g_2,\:\sigma\sim f,
\end{align}
where $\tau$ is known as the local parameter and $\lambda_j$'s are known as the global parameters, $g_1,g_2$ and $f$ are densities supported on $\mathbb{R}^{+}$. The prior structure (\ref{Gauss_scale}) induces approximate sparsity
in $\beta_j$ by shrinking the null components toward zero while retaining the true
signals \citep{polson2010shrink}. The global parameter $\tau$ controls the number of
signals, while the local parameters $\lambda_j$ dictate whether they are nulls. In this sense, the prior
(\ref{Gauss_scale}) approximates the properties of point-mass mixture priors \citep{george1997approaches,scott2010bayes}.

Global-local priors allow parameters to be updated in blocks via a fairly automatic Gibbs sampler that leads to rapid mixing and convergence of the resulting Markov chain.
In particular, letting $\bX$ be the $n\times p$ feature matrix, $\by$ be the $n\times 1$ response vector and $\bDelta=\tau^2diag(\lambda_1,...,\lambda_p)$, the distribution of $\bbeta=(\beta_1,...,\beta_p)'$ conditional on $\blambda=(\lambda_1,...,\lambda_p)',\tau$, $\sigma$, $\by$ and $\bX$ follows $N((\bX'\bX+\bDelta^{-1})^{-1}\bX'\by,\sigma^2(\bX'\bX+\bDelta^{-1})^{-1})$, and
can be updated in a block. On the other hand, $\lambda_j$'s are conditionally independent and allow fairly straightforward updating using either Gibbs sampling or slice sampling.
The posterior draws from $\bbeta,\blambda,\tau,\sigma$ are found to offer an accurate approximation to the operating characteristics of discrete mixture priors.
However, sampling from the full conditional posterior of $\bbeta$ require storing and computing the Cholesky decomposition of the $p\times p$ matrix $(\bX'\bX+\bDelta^{-1})$, that necessitates $p^3$ floating point operations (flops) and $p^2$ storage units, which can be severely prohibitive for large $p$.
Recent work in high dimensional regressions involving small $n$ and large $p$ \citep{bhattacharya2016fast} exploits the Woodbury matrix identity to draw from the full conditional posterior distribution of $\bbeta$ by inverting only an $n\times n$ matrix. When $n$ is large, this algorithm is embedded within an approximate MCMC sampling framework \cite{johndrow2020scalable} to facilitate fast computation. 

Following the literature on \emph{Sketching}, we propose to compress the response vector and feature matrix by a random linear transformation,
reducing the number of records from $n$ to $m$, while preserving the number of original features. The compressed version of the original dataset,
referred to as a \emph{sketch}, then serves as a surrogate for a high dimensional regression analysis with a suitable Gaussian scale mixture prior on the
feature coefficients. Since the number of compressed records $m$ is much smaller than the sample size $n$, one can adapt existing algorithms on the compressed data for efficient estimation of posterior distribution for feature coefficients with large number of features and large sample. On the theoretical front, we assume that the shrinkage priors of our interest have densities with a dominating peak around $0$ and flat, heavy tails, and have sufficient mass around the true regression coefficient. We then identify conditions on the predictor matrix, the interlink between the dimension of the random compression matrix, sample size, sparsity of the true regression coefficient vector and the number of features to prove optimal convergence rate of estimating the predictor coefficients asymptotically under data compression. Our empirical investigation ensures that the relevant features can be accurately learnt from the compressed data. Moreover, in presence of a higher degree of sparsity in the true regression model, the actual estimates of parameters and predictions are as accurate as they would have been, had the uncompressed data been used. Another attractive feature of this approach is that the original data are not recoverable from the compressed data, and the compressed data effectively reveal no more information than would be revealed by a completely new sample. In fact, the original uncompressed data does not need to be stored in the course of the analysis. While the core idea behind the development apply broadly to the class of global-local priors (\ref{Gauss_scale}), for sake of concreteness our detailed empirical investigation focuses on the popular horseshoe prior \citep{carvalho2010horseshoe} which corresponds to both $g_1$ and $g_2$ in (\ref{Gauss_scale}) being the half-Cauchy distribution. The horseshoe achieves the minimax adaptive rate of contraction when the true $\bbeta$ is sparse \citep{van2014horseshoe,van2017adaptive} and is considered to be among the state-of-the-art shrinkage priors.



In this context, it is worth mentioning the contribution of this article in light of the relevant literature of sketching in regression. Sketching has become an increasingly popular research topic in the machine learning literature in the last decade or so, see \cite{vempala2005random,halko2011finding,mahoney2011randomized,woodruff2014sketching} and references therein. In the context of high dimensional linear regressions, sketching has been employed to study various aspects of ridge regression, referred to as the sketched ridge regression. \cite{zhang2013recovering} study the dual problem in a complementary finite-sample setting, where as \cite{chen2015fast} propose an algorithm combining sparse embedding and the subsampled randomized Hadamard transform (SRHT), proving relative approximation bounds. \cite{wang2017sketched} study iterative sketching algorithms from an optimization point of view, for both the primal and the dual problems. \cite{zhou2008compressed} show that identifying the correct sparse set of relevant variables by the lasso are as effective under data sketching.  \cite{dobriban2018new} study sketching using asymptotic random matrix theory, but only for un-regularized linear regression. \cite{chowdhury2018iterative} propose a data-dependent algorithm in light of the ridge leverage scores. Other related works include \cite{ailon2006approximate,drineas2011faster, raskutti2016statistical,ahfock2017statistical,huang2018near}. To the best of our knowledge, we are the first to offer efficient and principled Bayesian computation algorithm with linear regressions involving large $n$ and $p$ using data sketching. Moreover, to the best of our knowledge, the theoretical result on the posterior convergence rate of regression parameters under data compression has not been established before.

Our proposal is related to compressed sensing approaches \citep{donoho2006compressed,candes2006near,eldar2012compressed}, with an important difference. While compressed sensing approaches broadly aim at reconstructing a sparse $\bX$ from a small number of its random linear combinations, we intend to
reconstruct a sparse function of $\bX$ only, and not the $\bX$ and $\by$ themselves. In fact, from our point of view of preserving privacy of the response vector and feature matrices, approximately reconstructing them should be viewed as undesirable. Our approach is fundamentally different from \cite{maillard2009compressed,guhaniyogi2015bayesian,guhaniyogi2016compressed} in that they compress each feature vector, leading to an $m$-dimensional compressed features from  $p$-dimensional features for each sample. In contrast, our compression framework does not alter the number of features in the analysis before and after compression.

The rest of the article proceeds as follows. Section~\ref{sec2} details out the proposed model and algorithm for efficient estimation of feature coefficients
in presence of large $n$ and $p$. Section~\ref{sec3} offers theoretical insights into the choice of $m$ as a function of the true sparsity, number of features and sample size $n$ to obtain accurate estimation of feature coefficients asymptotically. Section~\ref{sec4} empirically investigates parametric and predictive inferences from the proposed approach with the horseshoe shrinkage prior under various simulation cases. The proposed method is illustrated on a real data with big $p$ and $n$ in Section~\ref{sec5}, followed by the concluding remarks
in Section~\ref{sec6}.

\section{Sketching Response Vector and Feature Matrix for Large $n$}\label{sec2}
For subjects $i=1,...,n$, let $y_i\in\mathcal{Y}$ denote the response for subject $i$ corresponding to the feature $\bx_i\in\mathcal{R}^p$. This article focuses on the scenario where $n$ and $p$ both large. Let $\by=(y_1,...,y_n)'$ be the $n\times 1$ vector of responses and $\bX=[\bx_1:\cdots:\bx_p]'$ be the $n\times p$ matrix of predictors. As a first step to our proposal, we consider a sketching or data compression approach by pre-multiplying $\by$ and $\bX$ with a sketching matrix $\bPhi$ of dimension $m\times n$ with $m<<n$ to construct data sketches $\tilde{\by}=\bPhi\by$ and $\tilde{\bX}=\bPhi\bX$ of dimensions $m\times 1$ and $m\times p$, respectively. The data sketches are employed
to set up the high dimensional linear regression having the form
\begin{align}\label{model1}
\tilde{\by}=\tilde{\bX}\bbeta+\bepsilon,\:\bepsilon\sim N(0,\sigma^2 \bI),
\end{align}
where $\sigma^2$ is the idiosyncratic error variance. We do not estimate $\bPhi$ as a variable in the regression, rather follow the idea of data oblivious sketches
to construct $\bPhi$ prior to fitting the model (\ref{model1}). More specifically, following the idea of Gaussian sketching \citep{sarlos2006improved}, the elements $\Phi_{ij}$ of the $\bPhi$ matrix are drawn independently from N($0,1/n$). The computational complexity of obtaining the sketched data using Gaussian sketches is given by $O(mnp)$. While there are more computationally efficient data oblivious options for random projection/sketching matrix $\bPhi$, such as the Hadamard sketch \citep{ailon2009fast} and the Clarkson-Woodruff sketch \citep{clarkson2017low}, we find it to be less concerning in our framework since the computation time for Bayesian fitting of (\ref{model1}) far exceeds the difference in time for computing sketched data with different options of sketching matrices.

The data compression approach implemented here appears to be a special case of the \emph{matrix masking} technique proposed in the earlier privacy literature \citep{ting2008random,zhou2008compressed,zhao2019privacy}, which, although popular in the privacy literature, has not been given due attention theoretically, especially from a Bayesian perspective. A typical matrix masking procedure pre- and post-multiplies the data matrix $\bX$ by matrices $\bC$ and $\bD$, respectively, and releases $\bC\bX\bD$ for the ensuing analysis. The transformation is quite general, and allows the possibility of deleting records, suppressing subsets of variables and data swapping.
This article chooses $\bC=\bPhi$ and $\bD$ as the identity matrix so as to keep the original interpretation of the features. Notably, even in the case of $\bPhi$ being known, the linear system
$\bPhi \bX$ is grossly under-determined due to $m<<min(n,p)$. The privacy in information theoretic terms of this sketching procedure could be evaluated using an upper bound of the average mutual information $\mathcal{I}(\tilde{\bX}, \bX)/np$ per unit in the original data matrix $\bX$, and showing that $Sup\:\mathcal{I}(\tilde{\bX}, \bX)/np = O(m/n)$ \citep{zhou2008compressed}, where supremum is taken over all possible distributions of $\bX$. With $m$ growing at a much slower rate than $n$, asymptotically as $n\rightarrow\infty$, the supremum over average mutual information converges to $0$, intuitively meaning that the compressed data reveal no more information about the original data than could be obtained from an independent sample.
It is be noted that such a bound is obtained assuming that $\bPhi$ is known. In practice, only $\tilde{\bX}=\bPhi \bX$ (and not even $\bPhi$) will be revealed to the analyst. Hence, the imposed privacy through compression is more strict than what is revealed by this result.

Although not apparent, the ordinary high dimensional regression model in (\ref{eq:GLM}) bears a close connection with its computationally convenient alternative (\ref{model1}), especially for large $n$. To see this, note that pre-multiplying the high dimensional linear regression equation $\by=\bX\bbeta+\bepsilon$ by $\bPhi$ results in
\begin{align}\label{comp_eq}
\bPhi \by=\bPhi \bX\bbeta+\tilde{\bepsilon},\:\:\tilde{\bepsilon}\sim N(\bzero,\sigma^2\bPhi\bPhi').
\end{align}
Equations (\ref{comp_eq}) and (\ref{model1}) are similar in the mean function but differ in the error distribution. More specifically,
our approach assumes components of the error vector $\bepsilon$ are i.i.d., whereas the error vector from (\ref{comp_eq}) follows a $N(\bzero,\sigma^2\bPhi\bPhi')$ distribution.
 Lemma 5.36 and Remark 5.40 of \cite{vershynin2010introduction} show that $||\bPhi\bPhi'-\bI_m||_2\leq C'\sqrt{m/n}$, with probability at least $1-e^{-C''m}$, for some constants $C',C''>0$.
As $m$ grows at a slower rate than $n$, $m/n\rightarrow 0$ asymptotically. Hence, with large $n$, the error distributions of (\ref{model1}) and (\ref{comp_eq}) behave similarly with a probability close to $1$.

With prior distribution on $\bbeta$ set as a Gaussian scale-mixture distribution from the class of distributions given by (\ref{Gauss_scale}), posterior computation using a blocked Metropolis-within-Gibbs algorithm cycles through updating the full conditional distributions: (a) $\bbeta|\blambda,\sigma,\tau$, (b) $\blambda|\bbeta,\sigma,\tau$, (c) $\sigma|\blambda,\bbeta,\tau$ and (d) $\tau|\blambda,\bbeta,\sigma$. Explicit expressions for (a), (b), (c) and (d) for the horseshoe shrinkage priors \citep{carvalho2010horseshoe} are available in Appendix A. While updating (b), (c) and (d) do not face any computational challenge due to big $n$ or $p$, full conditional posterior updating of $\bbeta|\blambda,\sigma,\tau$ has the form given by
\begin{align}\label{cond_beta}
N\left(\left(\tilde{\bX}'\tilde{\bX}+\bDelta^{-1}\right)^{-1}\tilde{\bX}'\tilde{\by},\sigma^2(\tilde{\bX}' \tilde{\bX}+\bDelta^{-1})^{-1}\right),\:\:\bDelta=\tau^2\mbox{diag}(\lambda_1,...,\lambda_p).
\end{align}
The most efficient algorithm to sample from $\bbeta$ \citep{rue2001fast} computes Cholesky decomposition of $\left(\tilde{\bX}'\tilde{\bX}+\bDelta^{-1}\right)$ and employs the Cholesky factor to solve a series of linear systems to draw a sample from (\ref{cond_beta}). In absence of
any easily exploitable structure, computing and storing the Cholesky factor of this matrix involves $O(p^3)$ and $O(p^2$) floating point operations respectively \citep{golub2012matrix}, which leads to computational and storage bottlenecks with a large $p$. To overcome the computational and storage burden, we adapt the recent algorithm proposed in the context of uncompressed data with small sample size \citep{bhattacharya2016fast} to our setting. The detailed steps are given as following:
\begin{description}
\item \textbf{Step 1:} Draw $\bv_1\sim N(\bzero,\sigma^2\bDelta)$ and $\bv_2\sim N(\bzero,\bI_m)$
\item \textbf{Step 2:} Set $\bv_3=\tilde{\bX} \bv_1/\sigma+\bv_2$.
\item \textbf{Step 3:} Solve $(\tilde{\bX}\bDelta \tilde{\bX}'+\bI_m)\bv_4=(\tilde{\by}/\sigma-\bv_3)$.
\item \textbf{Step 4:} Set $\bv_5=\bv_1+\sigma\bDelta \tilde{\bX}'\bv_4$.
\end{description}
$\bv_5$ is a draw from the full conditional posterior distribution of $\bbeta$. Notably, the computational complexity of Steps 1-4 is dominated by two operations:
(Operation A) computing the inverse of $(\bPhi \bX\bDelta \bX'\bPhi'+\bI_m)$, and (Operation B) calculating $\bPhi \bX\bDelta \bX'\bPhi'$. (Operation A) leads to a complexity of $O(m^3)$, whereas (Operation B) incurs complexity of
$O(m^2p$). As we demonstrate in Section 4, the algorithm offers massive speed-up in computation with big $p$ and $n$, since $m<<min(n,p)$. Notably, an application of \cite{bhattacharya2016fast} on the uncompressed data would have incurred computational complexity dominated by $O(n^3)$ and $O(n^2p)$. Thus, our compression approach helps speeding up computation in our empirical investigations with big $n$ and $p$.

One important question arises as to how much inference is lost in lieu of the computational speed-up achieved by the data compression approach. In the sequel, we address this question both theoretically and empirically. Section~\ref{sec3} derives theoretical conditions on $m$, $n$, $p$ and the sparsity of the true data generating model to show asymptotically desirable estimation of feature coefficients. Thereafter, finite sample performance of the proposed approach is presented both in the simulation study and in the real data section.

\section{Posterior Concentration Properties of the Sketching Approach}\label{sec3}
This section studies convergence properties of the data sketching approach with high dimensional shrinkage prior on predictor coefficients. In particular, we will establish the posterior contraction rate of estimating the predictor coefficient vector for the proposed model (\ref{model1}) under mild regularity conditions. To begin with, we define a few notations.
\subsection{Notations}
In what follows, we add a subscript $n$ to the dimension of the number of features $p_n$ and the dimension of the compression matrix $m_{n}$ to indicate that both of them increase with the sample size $n$. This asymptotic paradigm is also meant to capture the fact that the number of rows of the sketching matrix $m_n$ is smaller than the sample size $n$.
Naturally, the response vector $\by$, feature matrix $\bX$, feature coefficient vector $\bbeta$ and the sketching matrix $\bPhi$ are also functions of $n$. We denote them by $\by_n$, $\bX_n$, $\bbeta_n$ and $\bPhi_n$, respectively. Note that the true data generating model under data sketching is given by (\ref{comp_eq}). We use superscript $*$ to indicate the true parameters $\bbeta_n^*$ and $\sigma^{*2}$. For simplicity in the algebraic manipulation, we assume that $\sigma^2=\sigma^{*2}$ are both known and fixed at $1$. This is a common assumption in asymptotic studies \citep{vaart2011information}. Furthermore, it is known that the theoretical results obtained by assuming $\sigma^2$ as a fixed value is equivalent to those obtained by assigning a prior with a bounded support on $\sigma^2$ \citep{van2009adaptive}. $P_{\bbeta_n^*}$ denotes probability distribution under the true data generating model (\ref{comp_eq}).  For vectors, we let $||\cdot||_1, ||\cdot||_{2}$ and $||\cdot||_{\infty}$ denote the $L_1, L_2$ and $L_{\infty}$ norms, respectively. The number of nonzero elements in a vector is given by $||\cdot||_0$. The quantities $e_{min}(\bA)$ and $e_{max}(\bA)$ respectively represent the minimum and maximum eigenvalues of a square matrix $\bA$. We use $\{\theta_n\}$ to denote the Bayesian posterior contraction rate which satisfies $\theta_n\rightarrow 0$.

\subsection{Assumptions, Framework and The Main Result}
For any subset of indices $\bxi\subset\{1,...,p_n\}$, $|\bxi|$ denotes the number of elements in the index set $\bxi$. Depending on whether $\bA$ is a vector or a matrix, $\bA_{\bxi}$ denotes the sub-vector or the sub-matrix corresponding to the indices $\bxi$. We let $\bxi^*=\{j:\beta_{j,n}^*\neq 0$\}, i.e., $\bxi^*$ are the indices of the nonzero entries for the true predictor coefficient $\bbeta_n^*$, and $s_{n}$ (dependent on $n$) designates the number of nonzero entries in $\bbeta_n^*$, i.e., $s_{n}=||\bbeta_{n}^{*}||_{0}=|\bxi^*|$.
Since the shrinkage prior on $\bbeta_{n}$ assigns zero probability at the point zero, the exact number of nonzero elements of $\bbeta_{n}$ is always $p_n$.
Before rigorously studying properties of the posterior distribution, we state some regularity conditions on the design matrix $\bX_n$, the compression matrix $\bPhi_n$ and the true sparsity $s_n$.
\begin{description}
\item (A) All covariates are uniformly bounded, let $|x_{i,j}|\leq 1$, for all $i=1,...,n$ and $j=1,..,p_n$.
\item (B) $||\bPhi_n\bPhi_n'-\bI_{m_n}||_2\leq C'\sqrt{m_n/n}$, for some constant $C'>0$, for all large $n$.
\item (C) $s_n\log(p_n)=o(m_n)$, $m_n=o(n)$.
\item (D) $e_{min}(\bX_{n,\bxi}'\bX_{n,\bxi}/n)\geq \eta$, for some $\eta>0$ and for all $\bxi\supset\bxi^*$ such that $|\bxi|\leq \bar{s}_n$, where $\bar{s}_n$ satisfies $s_n=o(\bar{s}_n)$.
\end{description}
(A) is a common assumption in the context of compressed sensing, see \cite{zhou2008compressed}. From the theory of random matrices, (B) occurs with probability at least $1-e^{-C''m_n}$ (see Lemma 5.36 and Remark 5.40 of \cite{vershynin2010introduction}). Hence (B) is a mild assumption for large $n$. (C) restricts the growth of the true sparsity and presents an interlink between the true sparsity, the rank of the random matrix, number of predictor coefficients and the sample size. (D) puts restriction on the smallest eigenvalue of the matrix $\tilde{\bX}_{n,\bxi}'\tilde{\bX}_{n,\bxi}/m_n$. Notably, Gaussian sketching approximately preserves the isometry condition \citep{ahfock2017statistical}, so that $\exists$ $\eta_0>0$ with the property that $e_{min}(\tilde{\bX}_{n,\bxi}'\tilde{\bX}_{n,\bxi}/m_n)\geq \eta_0e_{min}(\bX_{n,\bxi}'\bX_{n,\bxi}/n)$ with probability $f_n$ depending on $m_n$ and $p_n$ . This, together with assumption A$_1$(3) in \cite{song2017nearly} is used to argue that assumption (D) is satisfied with a positive probability.

Our next set of assumptions concern the tail behavior of the shrinkage priors of interest and the magnitude of the nonzero entries of the true coefficient $\bbeta_n^*$. Let $h_{\mu_n}(x)$ denote the prior density of $\beta_{j,n}$ for all $j$ with the set of hyper-parameters $\mu_n$. For $a_n=\sqrt{s_n\log(p_n)/m_n}/p_n$ and for a sequence $M_n$ nondecreasing as a function of $n$, we assume
\begin{description}
\item (E) $\max\limits_{j\in\bxi^*}|\beta_{j,n}^*|<M_n/2$.
\item (F) $1-\int_{-a_n}^{a_n}h_{\mu_n}(x)dx\leq p_n^{-(1+u)}$, for some positive constant $u$.
\item (G) $-\log(\inf\limits_{x\in[-M_n,M_n]}h_{\mu_n}(x))=O(\log(p_n))$.
\end{description}
Assumption (E) restricts the growth of the nonzero entries in the true regression parameter asymptotically.
Assumption (F) concerns the prior concentration, requiring that the prior density of $\beta_{j,n}$ for all $j$ has sufficient mass within the interval $[-a_n,a_n]$.
Finally, Assumption (G) essentially controls the prior density around the true feature coefficient. Notably, Assumptions (E)-(G) are frequently used in the high dimensional
Bayesian regression literature, including in \cite{jiang2007bayesian} and \cite{song2017nearly}.

Define 
$\mathcal{A}_n=\left\{\bbeta_{n}:||\bbeta_{n}-\bbeta_{n}^{*}||_2>3\theta_n\right\}$, 
$\mathcal{B}_n=\{\mbox{At least}\:\tilde{s}_n\: \mbox{number of} \:|\beta_{k,n}|\geq a_n\}$, with $\tilde{s}_n=O(s_n)$, $\mathcal{C}_n=\mathcal{A}_n\cup\mathcal{B}_n$. Since the shrinkage prior assigns zero probability at point zero, the number of nonzero elements of $\bbeta_n$ is $p_n$. Thus the number of nonzero components of $\bbeta_n$ is assessed by considering the number of $\beta_{k,n}$'s which exceeds a certain threshold $a_n$. Therefore, $\mathcal{B}_n$ can be viewed as set that indicates the number of nonzero predictor coefficients.
Further suppose $\pi_{n}(\cdot)$ and $\Pi_n(\cdot)$ are the prior and posterior densities of $\bbeta_{n}$ with $n$ observations respectively, so that
\begin{align*}
\pi_n(\bbeta_n)=\prod_{j=1}^{p_n}h_{\mu_n}(\beta_{j,n}),\:\:
\Pi_n(\mathcal{C}_n)=\frac{\int_{\mathcal{C}_n}f(\tilde{\by}_n|\bbeta_n)\pi_{n}(\bbeta_n)}
                          {\int f(\tilde{\by}_n|\bbeta_n)\pi_{n}(\bbeta_n)},
\end{align*}
where $f(\tilde{\by}_n|\bbeta_{n})$ is the joint density of $\tilde{\by}_n=\bPhi_n \by_n$ under model (\ref{model1}).
The following theorem shows posterior contraction for the proposed model, with the proof of the theorem given in the appendix.
\begin{theorem}\label{thm_main}
Under Assumptions (A)-(G), our proposed model satisfies $E_{\bbeta_n^*}(\Pi_n(\mathcal{C}_n))\rightarrow 0$, as $n,m_n\rightarrow\infty$ with the posterior contraction rate $\theta_n=E\sqrt{s_n\log(p_n)/m_n}$, for some constant $ E>0$.
\end{theorem}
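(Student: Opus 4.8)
I would run the familiar testing-plus-prior-mass machinery for posterior contraction (as in \cite{jiang2007bayesian,song2017nearly}), but adapted to the mildly misspecified sketched likelihood: the working model (\ref{model1}) has error covariance $\bI_{m_n}$ while the true law $P_{\bbeta_n^*}$ of (\ref{comp_eq}) has covariance $\bPhi_n\bPhi_n'$, and Assumption (B) makes these uniformly close. Writing $R_n(\bbeta_n)=f(\tilde{\by}_n\mid\bbeta_n)/f(\tilde{\by}_n\mid\bbeta_n^*)$ for the likelihood ratio and $D_n=\int R_n(\bbeta_n)\pi_n(\bbeta_n)\,d\bbeta_n$, we have $\Pi_n(\mathcal{C}_n)=D_n^{-1}\int_{\mathcal{C}_n}R_n(\bbeta_n)\pi_n(\bbeta_n)\,d\bbeta_n$, and since $\mathcal{C}_n=\mathcal{A}_n\cup\mathcal{B}_n$ it suffices to lower bound $D_n$ on a high-probability event and to bound $E_{\bbeta_n^*}$ of the restricted integrals over $\mathcal{B}_n$ and over $\mathcal{A}_n\cap\mathcal{B}_n^c$. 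A convenient device, valid for this Gaussian location family, is the identity $\log R_n(\bbeta_n)=\tilde{\bepsilon}_n'\tilde{\bX}_n(\bbeta_n-\bbeta_n^*)-\tfrac12\|\tilde{\bX}_n(\bbeta_n-\bbeta_n^*)\|_2^2$ under $P_{\bbeta_n^*}$ (using $\tilde{\by}_n-\tilde{\bX}_n\bbeta_n^*=\tilde{\bepsilon}_n$), which integrates to $E_{\bbeta_n^*}[R_n(\bbeta_n)]=\exp\!\big(\tfrac12(\bbeta_n-\bbeta_n^*)'\tilde{\bX}_n'(\bPhi_n\bPhi_n'-\bI_{m_n})\tilde{\bX}_n(\bbeta_n-\bbeta_n^*)\big)$, whose exponent (B) bounds by $\tfrac12 C'\sqrt{m_n/n}\,\|\tilde{\bX}_n(\bbeta_n-\bbeta_n^*)\|_2^2$; this is the single place where the error-covariance mismatch enters.

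For the denominator I would restrict the integral defining $D_n$ to $U_n=\{\bbeta_n:\|\bbeta_{n,\bxi^*}-\bbeta_{n,\bxi^*}^*\|_\infty\le\rho_n,\ \|\bbeta_{n,(\bxi^*)^c}\|_\infty\le a_n\}$ for a suitably small polynomial $\rho_n$. On $U_n$, Assumption (A), the calibration $a_n=\sqrt{s_n\log(p_n)/m_n}/p_n$, and a spectral-norm bound for the Gaussian sketch $\tilde{\bX}_n=\bPhi_n\bX_n$ keep $\|\tilde{\bX}_n(\bbeta_n-\bbeta_n^*)\|_2^2\lesssim m_n\theta_n^2$, while the linear term $\tilde{\bepsilon}_n'\tilde{\bX}_n(\bbeta_n-\bbeta_n^*)$ is mean-zero Gaussian with variance of the same order by (B), so a Gaussian tail bound gives $\log R_n(\bbeta_n)\ge-c_1 m_n\theta_n^2$ uniformly over $U_n$ with $P_{\bbeta_n^*}$-probability $\to1$. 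The prior mass $\pi_n(U_n)$ factorizes: the $s_n$ signal coordinates each contribute at least $\rho_n\inf_{x\in[-M_n,M_n]}h_{\mu_n}(x)\ge e^{-O(\log p_n)}$ by (E) and (G), and the remaining $p_n-s_n$ coordinates contribute $\big(\int_{-a_n}^{a_n}h_{\mu_n}\big)^{p_n-s_n}\ge(1-p_n^{-(1+u)})^{p_n}\to1$ by (F); hence $\pi_n(U_n)\ge e^{-O(s_n\log p_n)}$, which matches $m_n\theta_n^2=E^2 s_n\log p_n$ so that choosing $E$ large enough yields $D_n\ge e^{-c_1' m_n\theta_n^2}$ on an event of probability tending to one.

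For $\mathcal{B}_n$, Assumption (F) and a binomial union bound give prior probability at most $\binom{p_n}{\tilde{s}_n}p_n^{-(1+u)\tilde{s}_n}\le p_n^{-u\tilde{s}_n}$ that at least $\tilde{s}_n\asymp s_n$ of the $\beta_{k,n}$ exceed $a_n$; this beats $e^{c_1' m_n\theta_n^2}$ once $E$ is small relative to $u$, and combined with the $E_{\bbeta_n^*}[R_n]$ bound on a bounded sieve (plus a Neyman--Pearson test to dispose of the unbounded remainder, where the squared sketched separation and hence the type-II exponent is enormous) it shows $e^{c_1' m_n\theta_n^2}E_{\bbeta_n^*}\big[\int_{\mathcal{B}_n}R_n\pi_n\big]\to0$. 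For $\mathcal{A}_n\cap\mathcal{B}_n^c$, on $\mathcal{B}_n^c$ all but at most $\tilde{s}_n$ coordinates lie in $[-a_n,a_n]$, and the calibration of $a_n$ makes those contribute only $o(\theta_n)$ to $\|\bbeta_n-\bbeta_n^*\|_2$, so on $\mathcal{A}_n$ the restriction of $\bbeta_n-\bbeta_n^*$ to $\bxi=\bxi^*\cup\{\text{large coordinates}\}$ (of size $\le s_n+\tilde{s}_n\le\bar{s}_n$ eventually) has $\ell_2$-norm $\gtrsim\theta_n$; transferring (D) to the sketch via the approximate-isometry property cited after (D), together with a matching spectral-norm bound on $\tilde{\bX}_n$ to control the below-threshold coordinates, gives $\|\tilde{\bX}_n(\bbeta_n-\bbeta_n^*)\|_2\gtrsim\sqrt{m_n}\,\theta_n$ throughout this set. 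I would then construct minimum-distance tests $\phi_n$ for $\tilde{\bX}_n\bbeta_n^*$ against this separated alternative using an $O(s_n)$-sparse, bounded cover at scale $\theta_n$ (log-cardinality $O(s_n\log p_n)$) near the boundary and a dimension-based cover far away, yielding $E_{\bbeta_n^*}[\phi_n]\to0$ and type-II error $\le e^{-c_2 m_n\theta_n^2}$; bounding $E_{\bbeta_n^*}[(1-\phi_n)R_n(\bbeta_n)]$ by (type-II error)$\times$(the $E[R_n]$ factor tamed by (B)) controls this term once $c_2E^2$ exceeds $c_1'$ and the covering constants. Assembling,
\[
E_{\bbeta_n^*}[\Pi_n(\mathcal{C}_n)]\le 2\,P_{\bbeta_n^*}\!\big(D_n<e^{-c_1' m_n\theta_n^2}\big)+E_{\bbeta_n^*}[\phi_n]+e^{c_1' m_n\theta_n^2}\Big(E_{\bbeta_n^*}\big[\textstyle\int_{\mathcal{B}_n}R_n\pi_n\big]+E_{\bbeta_n^*}\big[(1-\phi_n)\textstyle\int_{\mathcal{A}_n\cap\mathcal{B}_n^c}R_n\pi_n\big]\Big),
\]
and each term tends to zero for $E$ chosen large enough for the testing/denominator trade-off and small enough for the $\mathcal{B}_n$ trade-off.

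The step I expect to be the main obstacle is the uniform control of the random sketch: establishing, on one high-probability event, both the restricted-eigenvalue lower bound for $\tilde{\bX}_{n,\bxi}$ simultaneously over all $O(s_n)$-subsets $\bxi\supset\bxi^*$ (the transfer of (D)) and a spectral-norm upper bound on $\tilde{\bX}_n=\bPhi_n\bX_n$ sharp enough that the $p_n$ below-threshold coordinates, each of size at most $a_n=\sqrt{s_n\log(p_n)/m_n}/p_n$, cannot swamp the signal in $\|\tilde{\bX}_n(\bbeta_n-\bbeta_n^*)\|_2$ — and then reconciling these with the heavy tails of the shrinkage prior when sieving out large $\bbeta_n$. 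This is precisely where both halves of Assumption (C), namely $m_n=o(n)$ and $s_n\log(p_n)=o(m_n)$, and the exact calibration of $a_n$ and of $\theta_n=E\sqrt{s_n\log(p_n)/m_n}$ are used.
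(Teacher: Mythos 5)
Your proposal follows essentially the same architecture as the paper's proof: the standard testing-plus-prior-mass machinery, with (a) a lower bound on the normalizing integral $D_n$ obtained from the prior mass of a small neighborhood of $\bbeta_n^*$ via Assumptions (E)--(G) and the calibration of $a_n$, with the covariance mismatch $\bPhi_n\bPhi_n'$ versus $\bI_{m_n}$ absorbed through Assumption (B) (the paper does this in Lemma A.2 via Ghosal's Lemma 10 and a KL/variance computation; you do it via a direct likelihood-ratio expansion and Gaussian tail bound — equivalent in substance), and (b) exponentially consistent tests against the separated alternative whose type-II exponent beats the denominator. The two points where you genuinely diverge in execution are worth noting. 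First, the paper's tests are explicit: $\kappa_n=\max_{\bxi\supset\bxi^*,|\bxi|\le s_n+\tilde{s}_n}1\{\|(\tilde{\bX}_{n,\bxi}'\tilde{\bX}_{n,\bxi})^{-1}\tilde{\bX}_{n,\bxi}'\tilde{\by}_n-\bbeta_{n,\bxi}^*\|_2\ge\theta_n\}$, analyzed through chi-squared tail bounds after inserting $(\bPhi_n\bPhi_n')^{-1}$ and invoking Lemma A.1 and the sketched version of Assumption (D); your covering-based minimum-distance tests achieve the same exponents but require the additional uniform spectral-norm control of $\tilde{\bX}_n$ that you correctly flag as the main obstacle — the paper's least-squares construction sidesteps part of that by working coordinate-subset by coordinate-subset, with the union bound over $\binom{p_n}{\tilde{s}_n+s_n}$ supports paying the same $e^{O(s_n\log p_n)}$ price as your covering numbers. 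Second, you dispose of $\mathcal{B}_n$ by a separate prior-mass/binomial bound; the paper's printed proof instead folds $\mathcal{B}_n$ into the type-II error bound over all of $\mathcal{C}_n$, even though its separation argument (which needs $\|\bbeta_n-\bbeta_n^*\|_2\ge 3\theta_n$ and $|\bzeta|\le s_n+\tilde{s}_n$) really only covers $\mathcal{A}_n\cap\mathcal{B}_n^c$ — so your explicit treatment of $\mathcal{B}_n$ is, if anything, the more careful accounting. The constant bookkeeping you worry about ($E$ large for the test/denominator trade-off versus the $\mathcal{B}_n$ bound) resolves because $\tilde{s}_n$ is itself taken proportional to $m_n\theta_n^2/\log p_n=E^2 s_n$, so both exponents scale with $E^2$.
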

The general result on posterior contraction in Theorem~\ref{thm_main} is applied to provide posterior contraction result for the proposed data sketching approach with a class of Gaussian scale mixture prior distributions on $\beta_{j,n}$. Indeed we assume that the prior density $h_{\mu_n}$ with hyper-parameter $\mu_n$ of each $\beta_{j,n}$ is symmetric around $0$ and has a polynomial tail, i.e., $h_{\mu_n}(x)\sim x^{-r}$ when $|x|$ is large, for some $r>1$. Notably prior densities for both the horseshoe shrinkage prior \citep{carvalho2010horseshoe} and the generalized double pareto shrinkage prior \citep{armagan2013generalized} have polynomial tails. Theorem~\ref{thm_main} can be adapted in
such a setting to arrive at the following result. The proof of the result can be found in the Appendix.
\begin{theorem}\label{consistency_new}
Let the feature matrix $\bX_n$, random compression matrix $\bPhi_n$ and the true feature coefficients $\bbeta_n^*$ satisfy Assumptions (A)-(G). Let the prior density with hyper-parameter $\mu_n$, given by $h_{\mu_n}(x)=h(x/\mu_n)$, has a polynomial tail, i.e., $h_{\mu_n}(x)\sim x^{-r}$ when $|x|$ is large, for some $r>1$. Further assume that  $\log(M_n)=O(\log p_n)$, $a_n=\sqrt{s_n\log(p_n)/m_n}/p_n$, $\mu_n\leq a_np_n^{-(u'+1)/(r-1)}$ and $\log(\mu_n)=O(\log(p_n))$, for some $u'>0$. Then the posterior contraction rate $\theta_n$ can be taken as
$E\sqrt{s_n\log(p_n)/m_n}$,for some constant $E>0$.
\end{theorem}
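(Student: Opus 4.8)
The plan is to deduce Theorem~\ref{consistency_new} from Theorem~\ref{thm_main} by checking that the polynomial tail of $h$, together with the stated rate conditions on $M_n$, $a_n$ and $\mu_n$, forces Assumptions (F) and (G); Assumptions (A)--(E), which concern only the design matrix, the compression matrix, the sparsity $s_n$ and the true coefficient $\bbeta_n^*$, are part of the hypothesis and so hold unchanged. The only device needed is the change of variables $t=x/\mu_n$: it converts every statement about $h_{\mu_n}(x)=\mu_n^{-1}h(x/\mu_n)$, whose scale depends on $n$, into a statement about the fixed density $h$ evaluated at arguments of order $a_n/\mu_n$ or $M_n/\mu_n$, at the cost only of an explicit power of $\mu_n$ whose logarithm is $O(\log p_n)$. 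Since $h(t)\sim t^{-r}$ as $|t|\to\infty$ with $r>1$, there are a threshold $t_0>0$ and constants $0<c_1\le c_2<\infty$ with $c_1|t|^{-r}\le h(t)\le c_2|t|^{-r}$ for $|t|\ge t_0$; this two-sided polynomial control is all the argument uses, and it is a routine matter to verify it for both the horseshoe and the generalized double Pareto densities.

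First I would verify (F). Writing the complementary probability as a tail integral and substituting $t=x/\mu_n$ gives $1-\int_{-a_n}^{a_n}h_{\mu_n}(x)\,dx=\int_{|t|>a_n/\mu_n}h(t)\,dt$. Because $\mu_n\le a_n p_n^{-(u'+1)/(r-1)}$ we have $a_n/\mu_n\ge p_n^{(u'+1)/(r-1)}\to\infty$, so for all large $n$ the whole integration region lies in $\{|t|\ge t_0\}$ and the bound $h(t)\le c_2|t|^{-r}$ applies there. Integrating yields $\int_{|t|>a_n/\mu_n}h(t)\,dt\le \tfrac{2c_2}{r-1}(a_n/\mu_n)^{-(r-1)}\le \tfrac{2c_2}{r-1}p_n^{-(u'+1)}$, and for any fixed $u$ with $0<u<u'$ the right-hand side is at most $p_n^{-(1+u)}$ once $n$ is large. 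This is (F).

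Next I would verify (G). The same substitution gives $\inf_{x\in[-M_n,M_n]}h_{\mu_n}(x)=\mu_n^{-1}\inf_{t\in[-M_n/\mu_n,\,M_n/\mu_n]}h(t)$. For the shrinkage priors of interest $h$ is symmetric and unimodal at $0$, so the infimum over a symmetric interval is attained at the endpoints and equals $h(M_n/\mu_n)$. As above $M_n/\mu_n\to\infty$, so $h(M_n/\mu_n)\ge c_1(M_n/\mu_n)^{-r}$ for large $n$, whence $-\log\big(\inf_{x\in[-M_n,M_n]}h_{\mu_n}(x)\big)\le r\log M_n-(r-1)\log\mu_n+O(1)$. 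The hypotheses $\log M_n=O(\log p_n)$ and $\log\mu_n=O(\log p_n)$ (equivalently $|\log\mu_n|=O(\log p_n)$, since $\mu_n\to0$) make the right-hand side $O(\log p_n)$, which is (G). With (A)--(G) all verified, Theorem~\ref{thm_main} applies and gives $E_{\bbeta_n^*}(\Pi_n(\mathcal{C}_n))\to0$ with $\theta_n=E\sqrt{s_n\log(p_n)/m_n}$.

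I expect the only delicate point to be the treatment of the infimum in (G): the argument rests on $h$ having genuinely monotone (decreasing) tails, so that the worst case over $[-M_n/\mu_n,M_n/\mu_n]$ occurs at the boundary rather than at an interior value, and it also needs $|\log\mu_n|$ to grow no faster than $\log p_n$ --- i.e.\ $\mu_n$ must not shrink super-polynomially in $p_n$, which is exactly what the hypothesis $\log(\mu_n)=O(\log p_n)$ guarantees, consistently with the upper bound $\mu_n\le a_n p_n^{-(u'+1)/(r-1)}$. Both conditions are immediate under the stated scaling for the horseshoe and generalized double Pareto priors, so the reduction to Theorem~\ref{thm_main} is complete.
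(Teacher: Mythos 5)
Your proof is correct and follows essentially the same route as the paper's: reduce to verifying Assumptions (F) and (G) via the change of variables $t=x/\mu_n$, bound the tail integral by $\tfrac{C}{r-1}(a_n/\mu_n)^{-(r-1)}\le p_n^{-(1+u)}$ using $\mu_n\le a_np_n^{-(u'+1)/(r-1)}$, and bound the infimum over $[-M_n,M_n]$ by the polynomial lower bound at the endpoint. Your treatment is in fact slightly more careful than the paper's on two points --- making explicit that the infimum sits at the boundary because $h$ has monotone tails, and getting the correct coefficient $-(r-1)\log\mu_n$ where the paper writes $(r+1)\log\mu_n$ --- but neither changes the argument.
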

Note that the minimax optimal posterior contraction rate without data sketching is given by $\sqrt{s_n\log(p_n)/n}$ which is $\rho_{n}=\sqrt{n/m_n}$ times faster that the posterior contraction rate with data sketching. In fact, $\rho_n$ throws light on the connection between the theoretical performance of (\ref{model1}) with the choice of $m_n$.
In particular, choice of $m_n=O(n/\log(n)^{\tilde{K}})$ maintains minimax optimal posterior contraction rate upto a $\log(n)$ factor even with data sketching. The next section empirically studies the performance of data sketching in high dimensional regressions with various other competitors. Special emphasis is given to investigate the discrepancy in the inference on $\bbeta_n$ from the full data and the sketched data to carefully assess the impact of sketching.
\section{Simulation Studies}\label{sec4}
This section investigates performance of our data sketching approach (\ref{model1}) with the horseshoe shrinkage prior \citep{carvalho2010horseshoe} on each of the predictor coefficients $\beta_j$, referred to as the Compressed Horseshoe (CHS). Broadly, we implement and present two different sets of simulations. In \textbf{Simulation 1}, we focus on data simulated from (\ref{eq:GLM}) with $n=1000$ and $p=10000$, where both models (\ref{eq:GLM}) and (\ref{model1}) can be fitted to analyze the difference in their posterior distributions of $\bbeta$ for different choices of $m$ and different degrees of sparsity. These simulation examples also highlight the relative computational efficiency of (\ref{model1}) with respect to (\ref{eq:GLM}). \textbf{Simulation 2} is then designed with a larger sample size $n=5000$ and $p=10000$ which render infeasibility in fitting the model (\ref{eq:GLM}) with the uncompressed data based on our available computational resources. Thus the purpose for \textbf{Simulation 2} is to assess the frequentist operating characteristics of CHS along with a few of its frequentist competitors.

\subsection{Simulation 1: comparison between the performances of CHS and HS for moderate $n$ and large $p$}
\noindent In \textbf{Simulation 1}, we draw $n=1000$ samples from the high dimensional linear regression model (\ref{eq:GLM}) with the number of features $p=10000$ and the error
variance $\sigma^2=1.5$. The $p$-dimensional feature vectors $\bx_i$ for each $i=1,...,n$ are simulated from $N(\bzero,\bSigma)$, with two different constructions of $\bSigma$  undertaken in simulation studies.\\
\emph{Scenario 1:} $\bSigma=\bI_p$, i.e., all features are simulated i.i.d. We refer to this as the independent correlation structure for the features.\\
\emph{Scenario 2:} $\bSigma=0.5 \bI_p+0.5 \bJ_p$, where $\bJ_p$ is a matrix with $1$ at each entry. This structure ensures that any pair of features have the same correlation of $0.5$. We refer to this as the compound correlation structure for the features.\\
Under Scenarios 1 and 2, the $p$-dimensional true feature coefficient vector is simulated with the number of nonzero entries: (a) $s=10$; (b) $s=30$ and (c) $s=50$. The quantity $(1-s/p)$ is referred to as the true sparsity of the model. The magnitude of $s$ nonzero entries are simulated randomly from a $U(1.5,3)$ distribution with the sign of each entry randomly assigned to be positive or negative.

To compare the effect of data sketching on the estimation of posterior distribution of $\bbeta$, we implement (\ref{eq:GLM}) (with the uncompressed data) and (\ref{model1}) with $m=100,200,300,400,500$. The full/uncompressed data posterior distribution obtained using MCMC serves as the benchmark in
our assessment of the performance of (\ref{model1}). Let $\pi(\beta_j|\by,\bX)$ be the density of the full data posterior distribution for $\beta_j$ estimated using sampling and $\pi_m(\beta_j|\tilde{\by},\tilde{\bX})$ be the density of posterior distribution for $\bbeta$ estimated using (\ref{model1}) with the compressed data, where the subscript $m$ denotes the dimension of the sketching matrix $\bPhi$ to compute $\tilde{\by}$ and $\tilde{\bX}$. We used the following metric
based on the Hellinger distance to compare the accuracy of $\pi_m(\beta_j|\tilde{\by},\tilde{\bX})$ in approximating $\pi(\beta_j|\by,\bX)$
\begin{align}\label{KL_div}
Accuracy_{j,m}=1-\frac{1}{2}\int_{\bbeta}\left(\sqrt{\pi_m(\beta_j|\tilde{\by},\tilde{\bX})}-\sqrt{\pi(\beta_j|\by,\bX)}\right)^2d\beta_j.
\end{align}
The metric Accuracy$_{j,m}$ satisfies $0\leq$ Accuracy$_{j,m}$ $\leq 1$.
The approximation of full data posterior
density $\pi(\beta_j|\by,\bX)$ by $\pi_m(\beta_j|\tilde{\by},\tilde{\bX})$ is poor or excellent if the accuracy metric is close to 0 or 1, respectively. We present
Accuracy$_{j,m}$ averaged over all predictors, given by Accuracy$_m=\frac{1}{p}\sum_{j=1}^p$ Accuracy$_{j,m}$.

\noindent \textbf{Simulation 1} also highlights the computational efficiency offered by the data sketching approach. Let ESS$_{m}$ be the average effective sample size of $\bbeta$ (out of $5000$ post burn-in iterates) from (\ref{model1}) with rank($\bPhi)=m$, that runs for $T_{m}$ hours. We will measure the computational efficiency of our proposed approach for a specific choice of $m$ as
\begin{align}
  \label{eq:s3}
  \text{Computational Efficiency}_{m} = log_2 \text{ESS}_{m} / \text{T}_{m},
\end{align}
where ESS$_{m}$ is the effective sample size over $p$ feature coefficients computed using the coda R package. Computational efficiency of the full posterior will also be reported
to provide a relative assessment. All simulations are replicated $50$ times.

\subsubsection{Results}\label{sec:small}
Table~\ref{accuracy} presents the Accuracy metric averaged over all predictors and all replications. The results show excellent performance of $\pi_m(\bbeta|\tilde{\by},\tilde{\bX})$ in approximating $\pi(\bbeta|\by,\bX)$ for all cases except when both the sparsity and rank of the random compression matrix
is low. This empirical observation is also supported by Theorem ~\ref{thm_main} which requires the degree of sparsity to grow at a much slower rate than the rank
of the random compression matrix. Understandably, as $m$ increases the accuracy becomes close to $1$, with the accuracy being little impacted when the sparsity is very low.
No notable difference is observed in the performance when predictors are correlated vis-a-vis when predictors are simulated independently.
\begin{table}[h]
\centering
\begin{tabular}{|  l | l | | c | c | c | c | c | c |}
\cline{1-8}
 & &\multicolumn{3}{ |c| }{Scenario 1} &  \multicolumn{3}{ |c| }{Scenario 2}\\
\cline{2-8}
& & $s=10$ & $s=30$ & $s=50$ & $s=10$ & $s=30$ & $s=50$ \\
\hline
\multirow{5}{*}{Avg. Accuracy}
 & $m=100$ & 0.88 & 0.79 & 0.64 & 0.88 & 0.81 & 0.66\\
 & $m=200$ & 0.94 & 0.87 &  0.76 & 0.92 & 0.84 & 0.73  \\
 & $m=300$ & 0.98 & 0.96 & 0.93 & 0.99 & 0.96 & 0.94 \\
 & $m=400$ & 0.98 & 0.98 & 0.94 & 0.98 & 0.98 & 0.94\\
 & $m=500$ & 0.98 & 0.98 & 0.95 & 0.99 & 0.98 & 0.95\\
\hline
 \multirow{5}{*}{Comp. Efficiency} & $m=100$ & 2.83 & 2.81 & 2.81 & 2.86 & 2.81 & 2.83\\
 & $m=200$ & 2.01 & 2.03 & 2.03 & 2.02 & 2.04 & 2.03\\
 & $m=300$ & 1.28 & 1.30 & 1.30 & 1.32 & 1.31 &  1.32\\
 & $m=400$ & 1.03 & 1.02 & 1.06 & 1.03 & 1.02 & 1.05\\
 & $m=500$ & 0.85 & 0.86 & 0.86 & 0.86 & 0.87 & 0.84\\
 & HS      & 0.32 & 0.31 & 0.31 & 0.31 & 0.32 & 0.32\\
 \hline
\end{tabular}
\caption{The first five rows present metric to estimate accuracy of estimating full posterior of $\bbeta$ by the posterior of $\bbeta$ with compressed data, as described in (\ref{KL_div}). We present the metric averaged over all predictors and all replications. The metric is presented for different choices of $m=100,200,300,400,500$ and different degrees of sparsity for the true coefficient $\bbeta^*$. The upper bound of the accuracy measure is 1 and a higher value represents more accuracy. We also present computational efficiency of CHS, as described in (\ref{eq:s3}), for different choices of $m$ and for different degrees of sparsity under the two different simulation scenarios. Computational efficiency of the posterior distribution of $\bbeta$ with the uncompressed data (referred to as the HS) has also been presented.}\label{accuracy}
\end{table}
Since the sample size is moderate, we do not expect to see a lot of gain in terms of computational efficiency of CHS over HS. CHS with $m=100$ appears to be around $\sim 10$ times
computationally more efficient than HS. The computational efficiency decreases as we increase the rank of the compression matrix. Similar to accuracy, the computational efficiency seems to be not severely affected by the degree of sparsity or the correlation in the features.

\subsection{Simulation 2: comparison between CHS and its frequentist competitors with larger sample size}
\noindent \textbf{Simulation 2} is designed to assess performance of the proposed framework for a large $p$, large $n$ setting. Thus, we follow the identical data generation scheme as \textbf{Simulation 1} with a large sample size $n=5000$ to construct simulated data. The large values of $p$ and $n$ prohibits Bayesian model fitting of (\ref{eq:GLM})
using the horseshoe prior using our available computational resources. Hence, we focus on investigating frequentist operating characteristics of CHS along with its frequentist competitors in high dimensional regression. As a frequentist competitor to CHS, we implement the minimax concave penalty (MCP) method \citep{zhang2010nearly} on the full data.
Additionally, we fit MCP on randomly chosen $m$ data points from the sample of size $n$, and refer to this competitor as Partial MCP (PMCP). The MCP on full data provides a comparison of our approach with a frequentist penalized optimizer in high dimensional regression with big $n$ and $p$. While MCP with the full data is likely to perform better than CHS with the compressed data, the discrepancy in performance of CHS and MCP can be seen as an indicator of loss of inference due to data compression. On the other hand, comparison of CHS with PMCP demonstrates the inferential advantage of fitting a principled Bayesian approach with sketching that uses information from the entire sample over fitting of a frequentist penalization scheme with naive sampling of $m$ out of $n$ data points. Although the remaining section presents excellent performance of the sketching approach with the horseshoe prior on $\beta_j$'s, we expect similar performance from other Gaussian scale mixture prior distributions, such as the Generalized Double Pareto \citep{armagan2013generalized} prior or the normal gamma prior \citep{griffin2010inference}.


To assess how the true sparsity $(1-s/p)$ and the rank $m$ of the random compression matrix interplay, we fit CHS with $m=200$ and $m=400$ in both simulation scenarios under the three different sparsity levels corresponding to (a), (b) and (c). For MCMC based model implementation of CHS, we discard the first $5000$ samples as burn-in and draw inference based on the $5000$ post burn-in samples. Both MCP and PMCP are fitted with the \texttt{R} package \texttt{ncvreg} with tuning parameters chosen using a $10$-fold cross validation.

The inferential performances of the competitors are compared based on
the overall mean squared error (MSE) of estimating the true predictor coefficient vector $\bbeta^*$ and the mean squared error of estimating the truly nonzero predictor coefficient vector $\bbeta_{nz}^*$ (referred to as the MSE$_{nz}$). These metrics are given by
\begin{align}\label{metric}
\mbox{MSE}=||\hat{\bbeta}-\bbeta^*||_2^2/p,\:\:\:\:\mbox{MSE}_{nz}=||\hat{\bbeta}_{nz}-\bbeta_{nz}^*||_2^2/s,
\end{align}
where $\hat{\bbeta}$ and $\hat{\bbeta}_{nz}$ is a point estimate for $\bbeta$ and $\bbeta_{nz}$, respectively. For CHS, the point estimate is taken to be the posterior mean.
Uncertainty of estimating $\bbeta$ from CHS is characterized through coverage and length of 95\% credible intervals averaged over all $\beta_j$'s, $j=1,...,p$. Additionally, we report the coverage and length of 95\% credible intervals averaged over truly nonzero $\beta_j$'s. Since model fitting in (\ref{model1}) is performed with data sketches, it is not possible to draw predictive inference directly. Hence, the quantity $||\bX\hat{\bbeta}-\bX\bbeta^*||_2^2/n$ is reported to provide a rough assessment of the predictive inference from CHS. This quantity is also computed and presented for other competitors.
All results presented are averaged over $50$ replications.

\subsubsection{Results}
Figures~\ref{Fig_MSE_ind} and \ref{Fig_MSE_com} present the boxplots for MSE and MSE$_{nz}$ for all competitors under the three different sparsity levels in Scenarios 1 and 2, respectively. Understandably, MCP applied on the full data is the best performer in all simulation cases. With  small to moderate value of the ratio $s/m$, CHS significantly outperforms PMCP, both in terms of MSE and
MSE$_{nz}$. This becomes evident by comparing the performances of CHS and PMCP for $m=400$ under all three cases (a)-(c) and for the case $m=200, s=10$. In fact when $s/m$ is small, CHS is also found to offer competitive performance with MCP (refer to the results under $m=400$). This observation is consistent with our findings in Section~\ref{sec:small}, where small values of $s/m$ shows little discrepancy between the full posterior of $\bbeta$ and posterior of $\bbeta$ under data compression.  As sparsity decreases and $s/m$ becomes higher, the performance gap between CHS and PMCP narrows. This is evident from both Figures~\ref{Fig_MSE_ind} and \ref{Fig_MSE_com}, corresponding to the case with $s=30, 50$ and $m=200$.
Consistent with the point estimation of $\bbeta$, Table~\ref{table3} shows notable advantage of CHS over PMCP in terms of predictive inference, especially with smaller $s/m$. MCP on the full data is naturally found to be the superior performer among the three. We observe a similar trend in the performance, both under Scenario 1 and 2.

\begin{figure}
  \begin{center}
    \subfigure[MSE of $\bbeta$: CHS, $m=200$]{\includegraphics[width=4.0 cm]{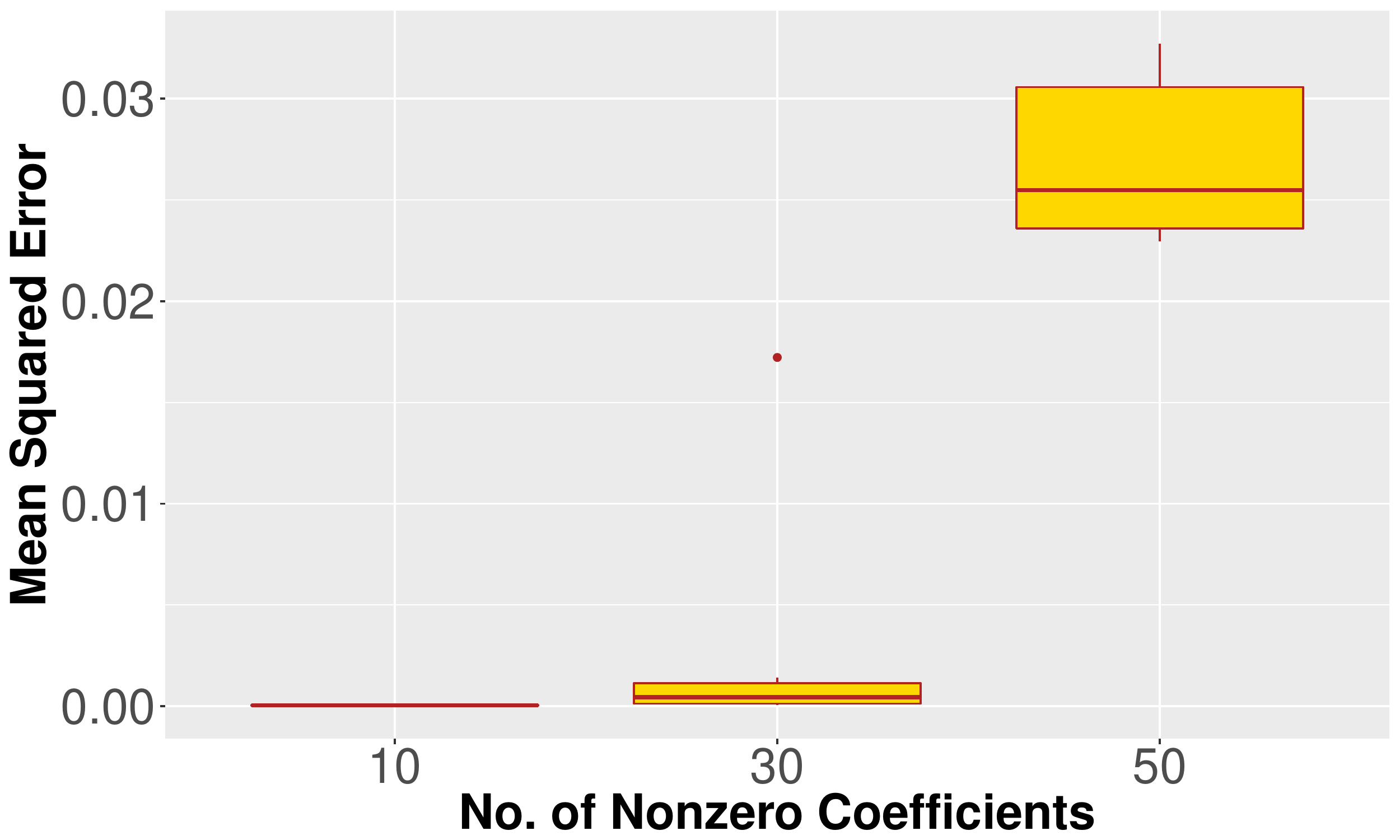}\label{AUC1}}
    \subfigure[MSE of $\bbeta$: PMCP, $m=200$]{\includegraphics[width=4.0 cm]{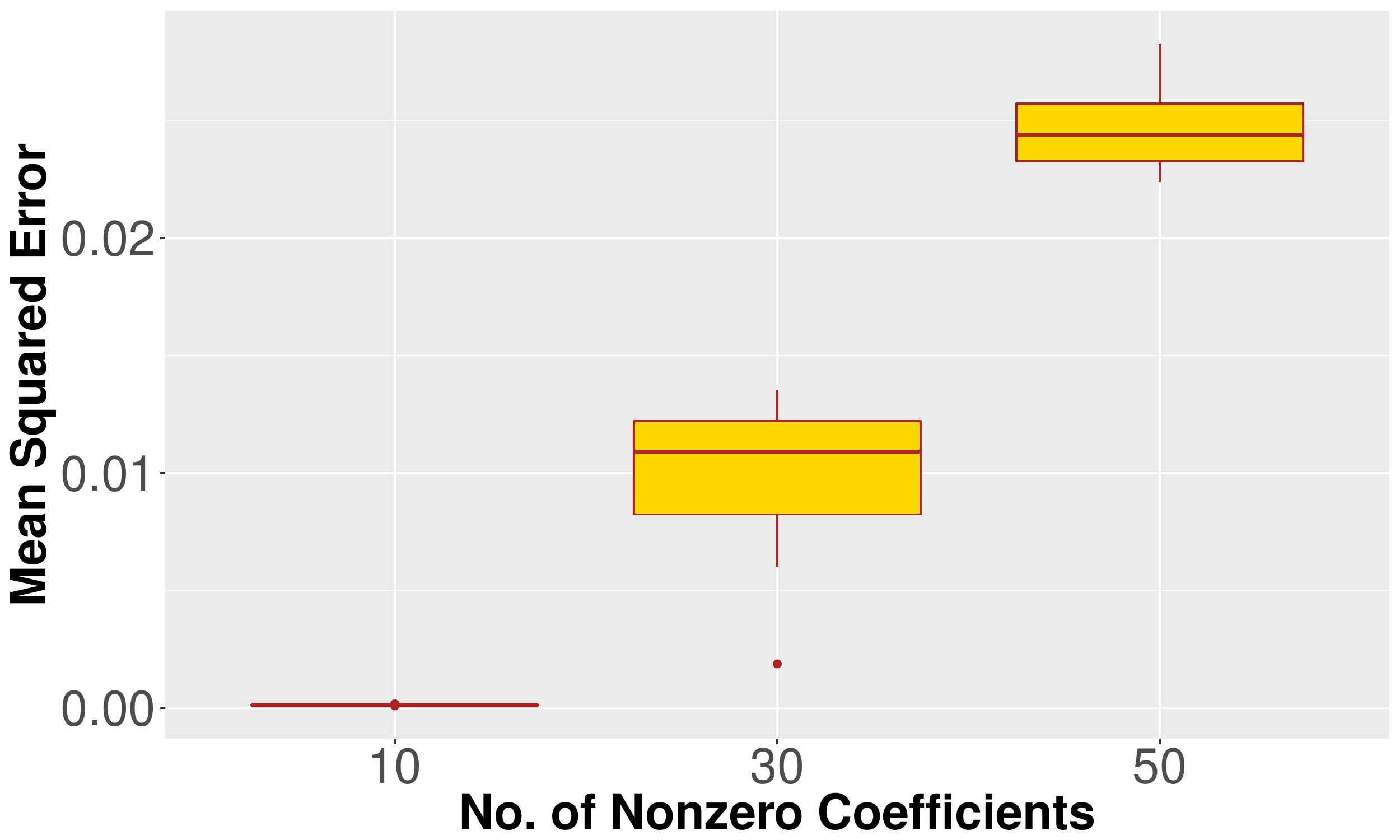}\label{AUC2}}
    \subfigure[MSE of $\bbeta$: MCP]{\includegraphics[width=4.0 cm]{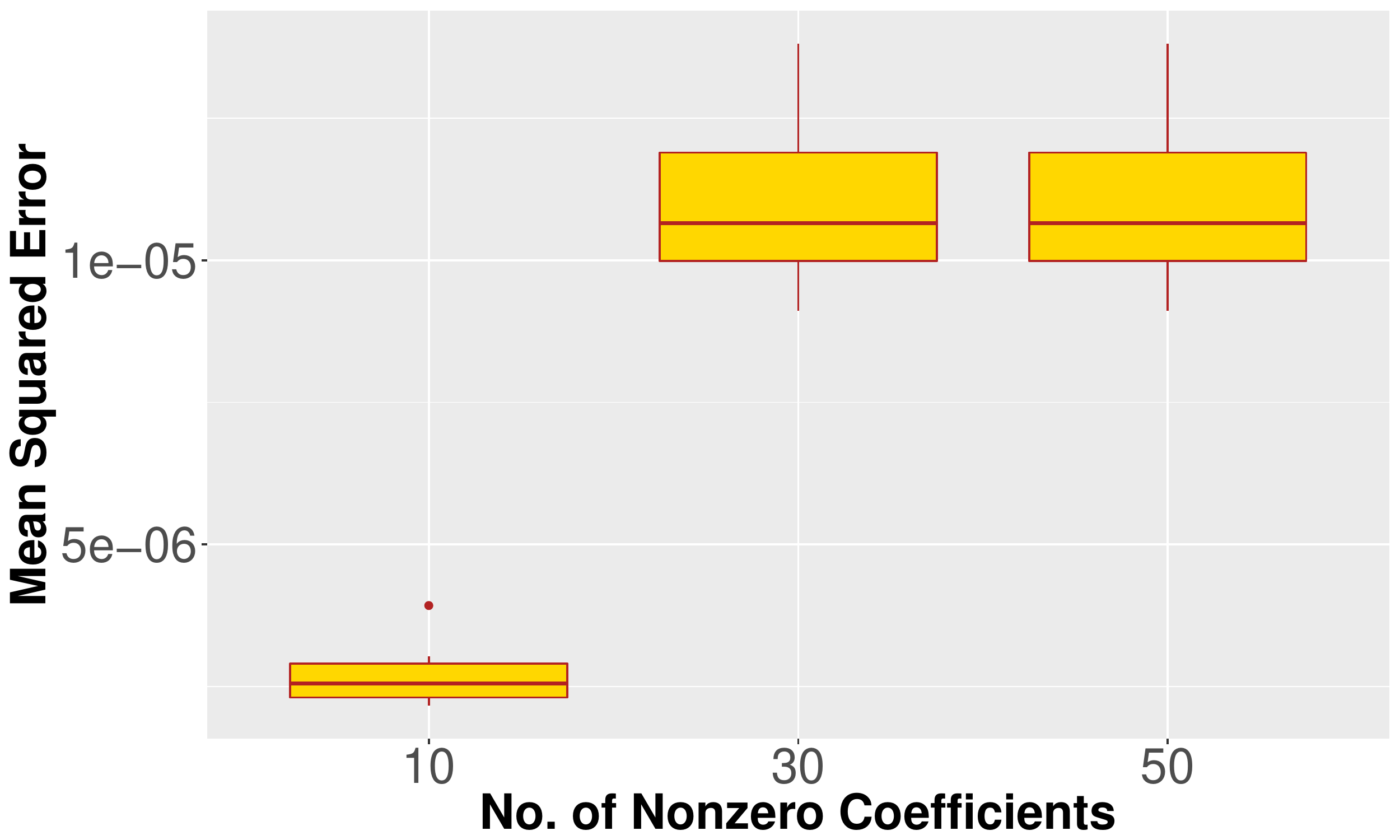}\label{AUC2}}\\
    \subfigure[MSE of nonzero $\bbeta$: CHS, $m=200$]{\includegraphics[width=4.0 cm]{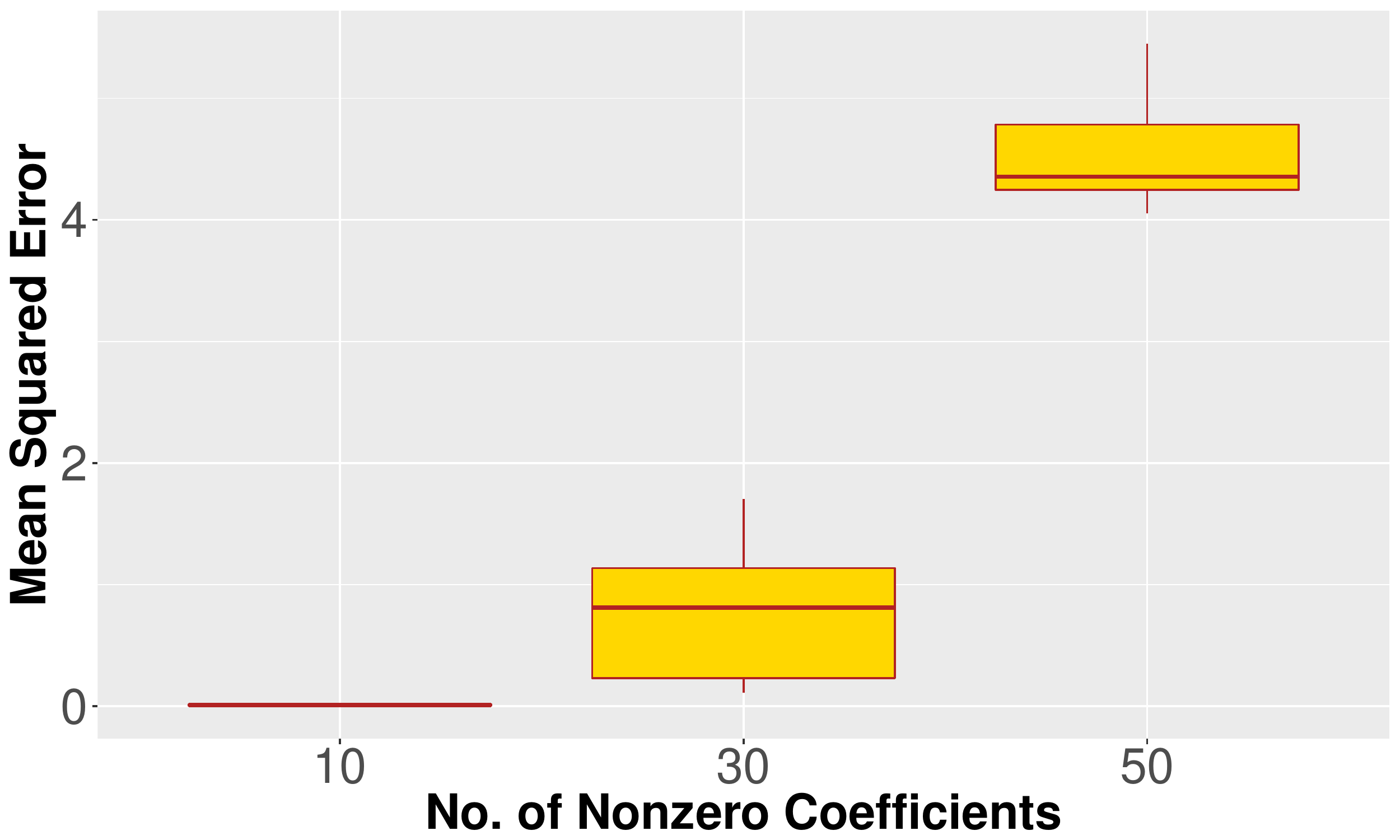}\label{AUC1}}
    \subfigure[MSE of nonzero $\bbeta$: PMCP, $m=200$]{\includegraphics[width=4.0 cm]{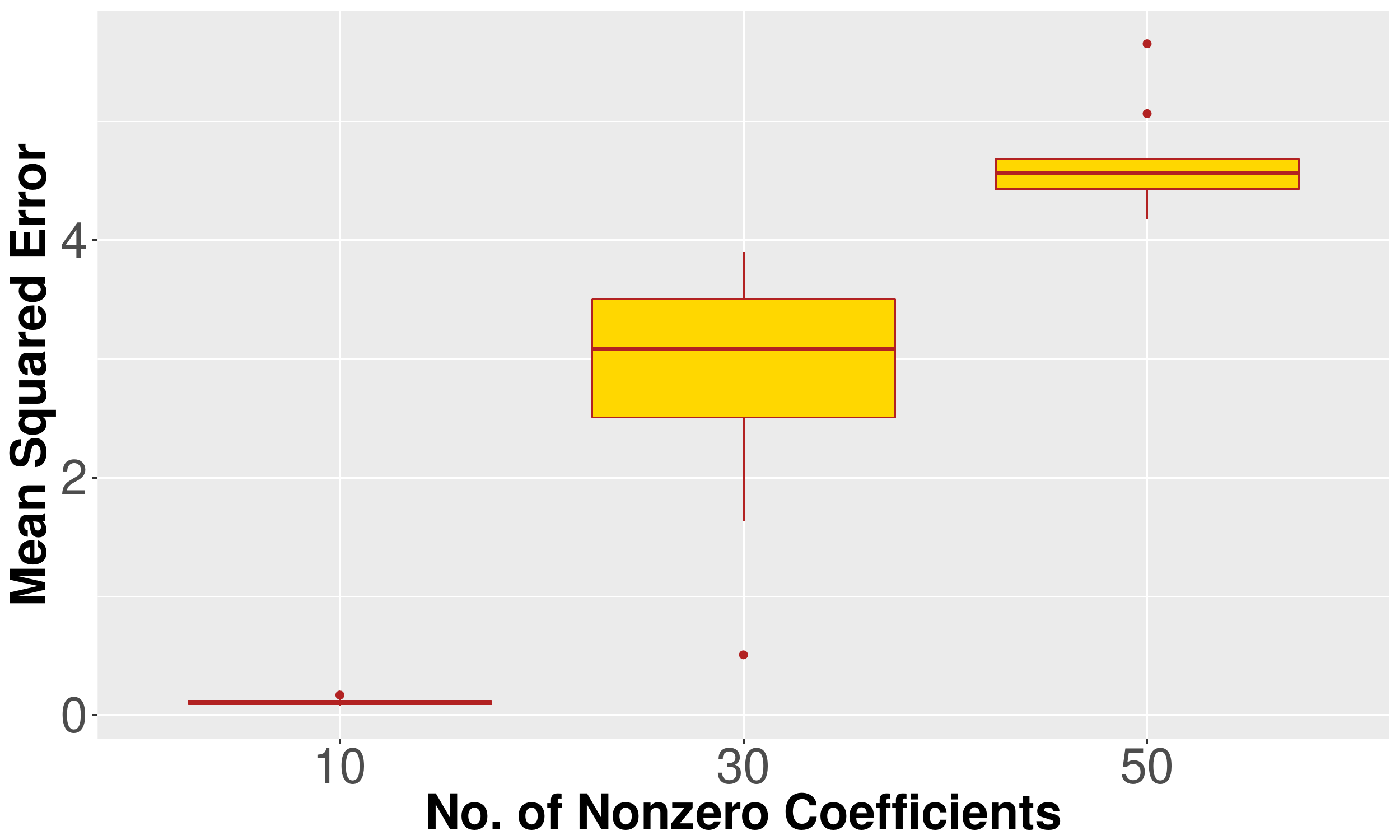}\label{AUC2}}
    \subfigure[MSE of nonzero $\bbeta$: MCP]{\includegraphics[width=4.0 cm]{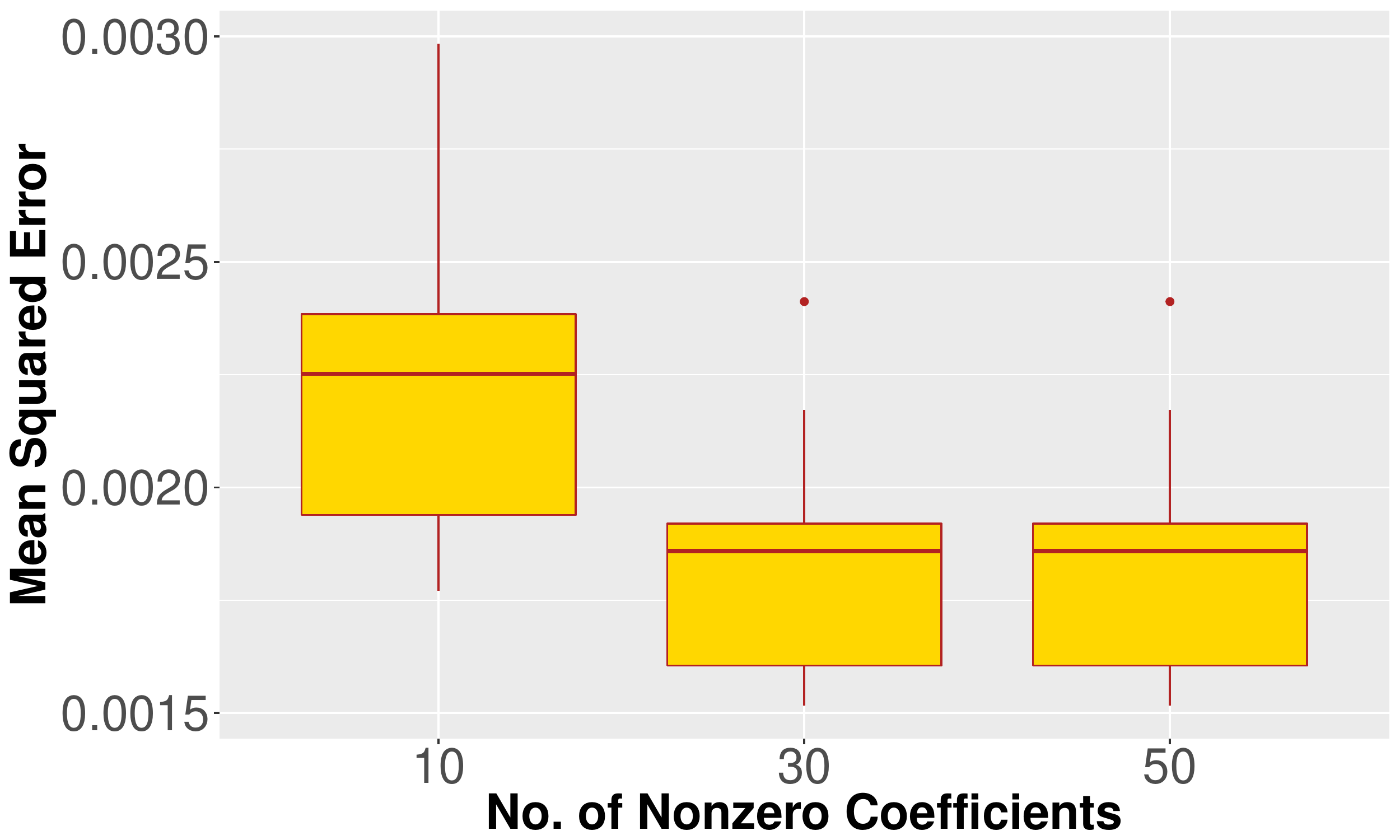}\label{AUC2}}\\
    \subfigure[MSE of $\bbeta$: CHS, $m=400$]{\includegraphics[width=4.0 cm]{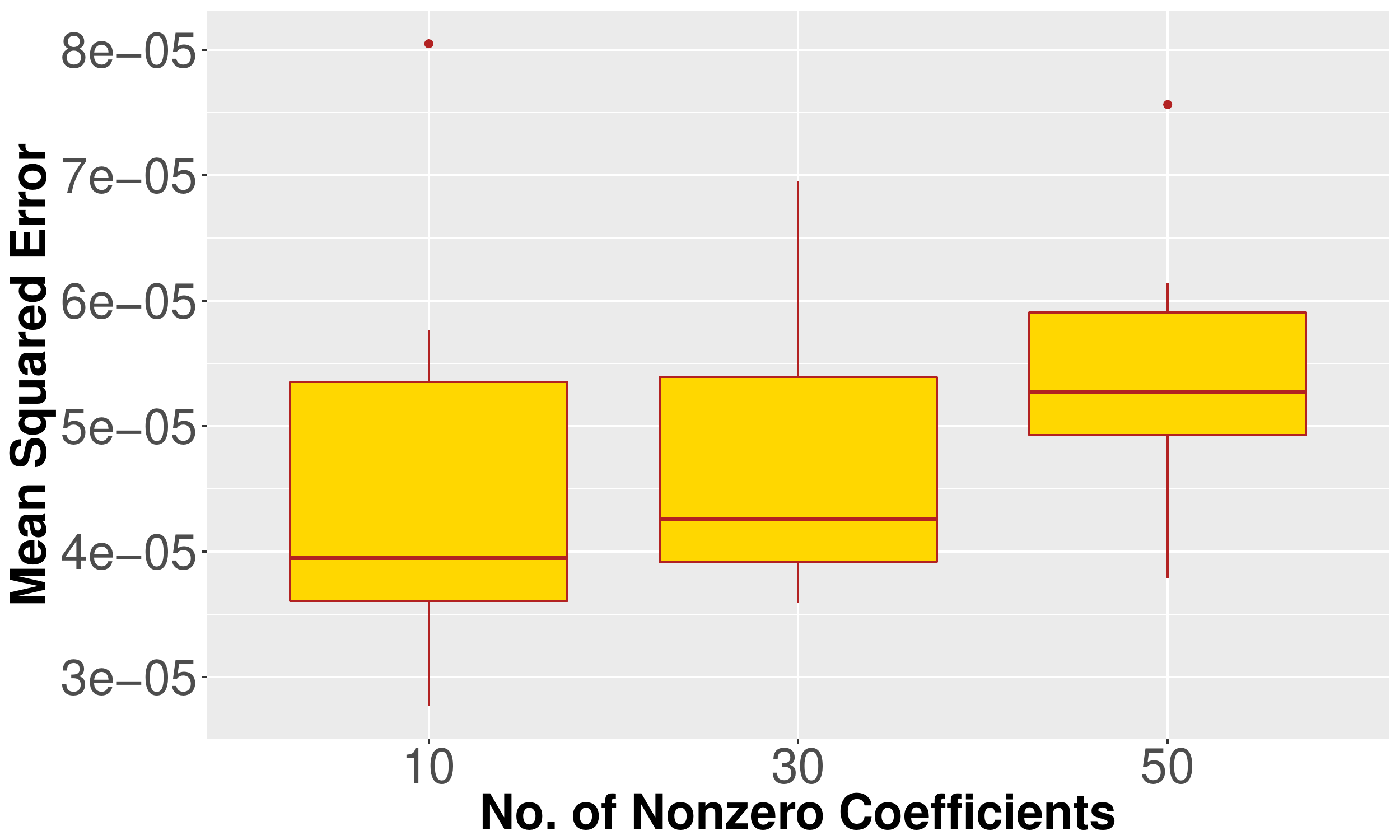}\label{AUC1}}
    \subfigure[MSE of $\bbeta$: PMCP, $m=400$]{\includegraphics[width=4.0 cm]{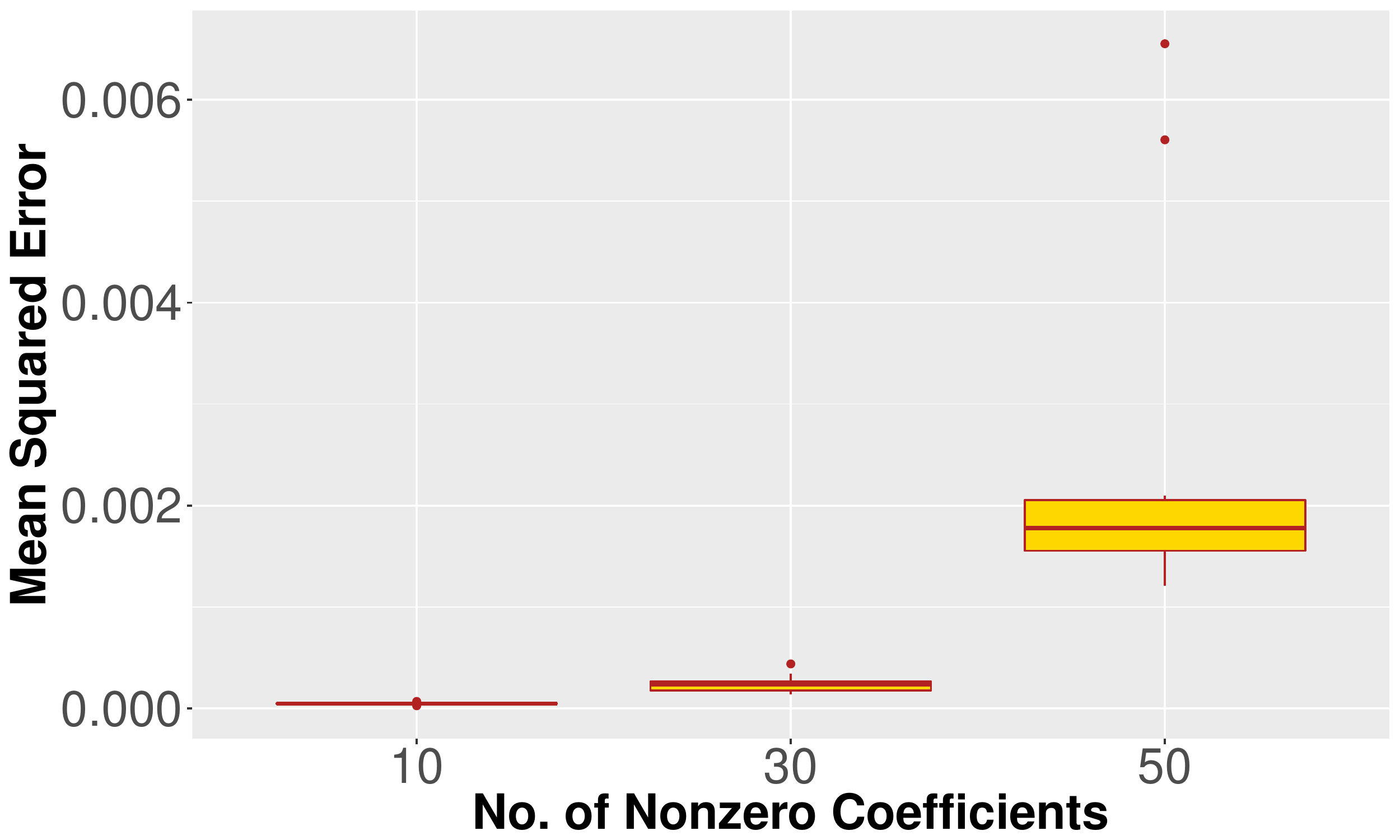}\label{AUC2}}
    \subfigure[MSE of $\bbeta$: MCP]{\includegraphics[width=4.0 cm]{MSE_m400_Lasso_ind.pdf}\label{AUC2}}\\
    \subfigure[MSE of nonzero $\bbeta$: CHS, $m=400$]{\includegraphics[width=4.0 cm]{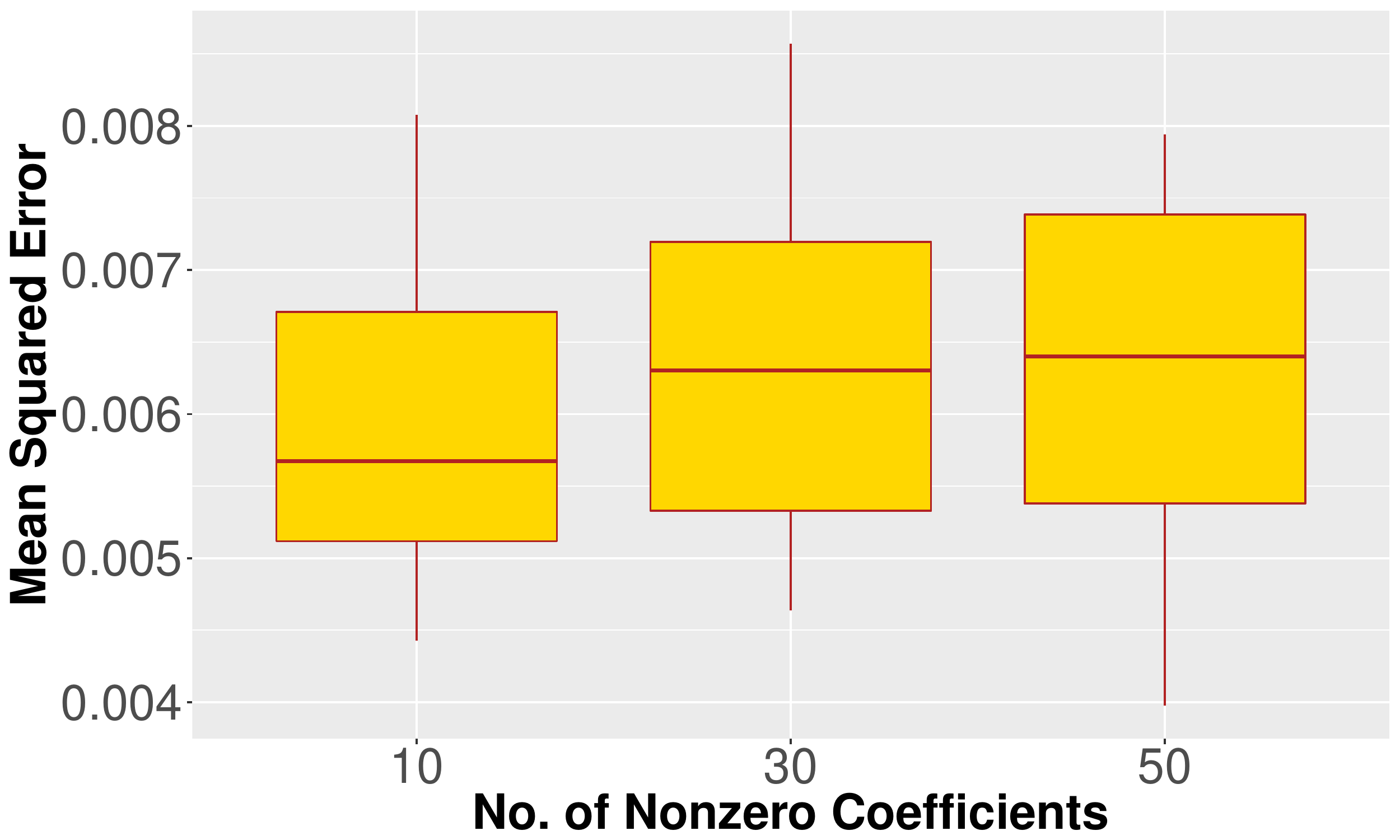}\label{AUC1}}
    \subfigure[MSE of nonzero $\bbeta$: PMCP, $m=400$]{\includegraphics[width=4.0 cm]{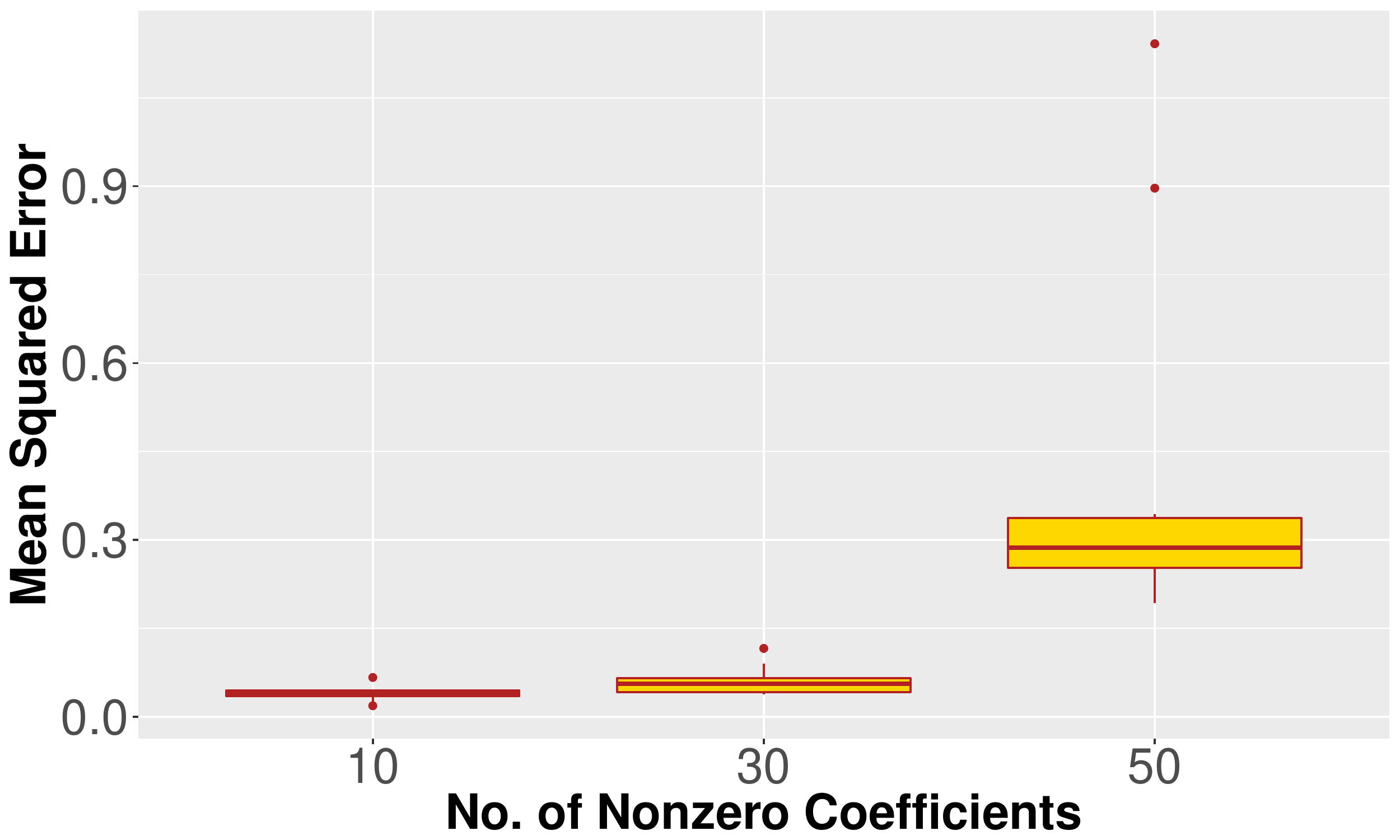}\label{AUC2}}
    \subfigure[MSE of nonzero $\bbeta$: MCP]{\includegraphics[width=4.0 cm]{MSE_nz_m400_Lasso_ind.pdf}\label{AUC2}}
 \caption{First and third row present mean squared error (MSE) of estimating the true predictor coefficient $\bbeta^*$ by a point estimate of $\bbeta$ from CHS, PMCP and MCP for $m=200$ and $m=400$, respectively. Second and fourth row present mean squared error (MSE) of estimating the true nonzero coefficients in $\bbeta^*$ by a point estimate of the corresponding coefficients in $\bbeta$ from CHS, PMCP and MCP for $m=200$ and $m=400$, respectively. All figures correspond to the scenarios where the predictors are generated under the independent correlation structure (Scenario 1). Each figure shows performance of a competitor under the data generated with $10$, $30$ and $50$ nonzero coefficients in $\bbeta^*$. }\label{Fig_MSE_ind}
 \end{center}
\end{figure}

\begin{figure}
  \begin{center}
    \subfigure[MSE of $\bbeta$: CHS, $m=200$]{\includegraphics[width=4.0 cm]{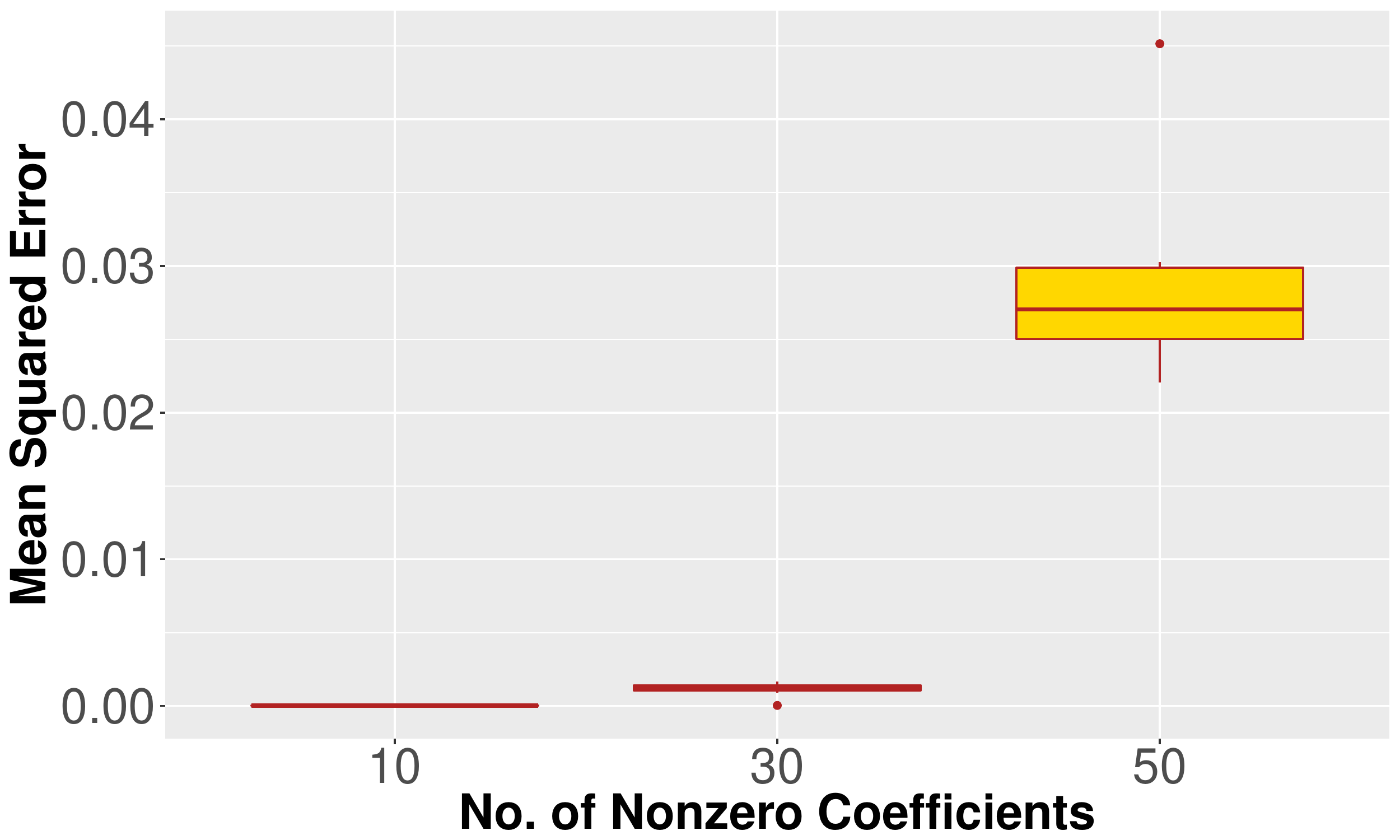}\label{AUC1}}
    \subfigure[MSE of $\bbeta$: PMCP, $m=200$]{\includegraphics[width=4.0 cm]{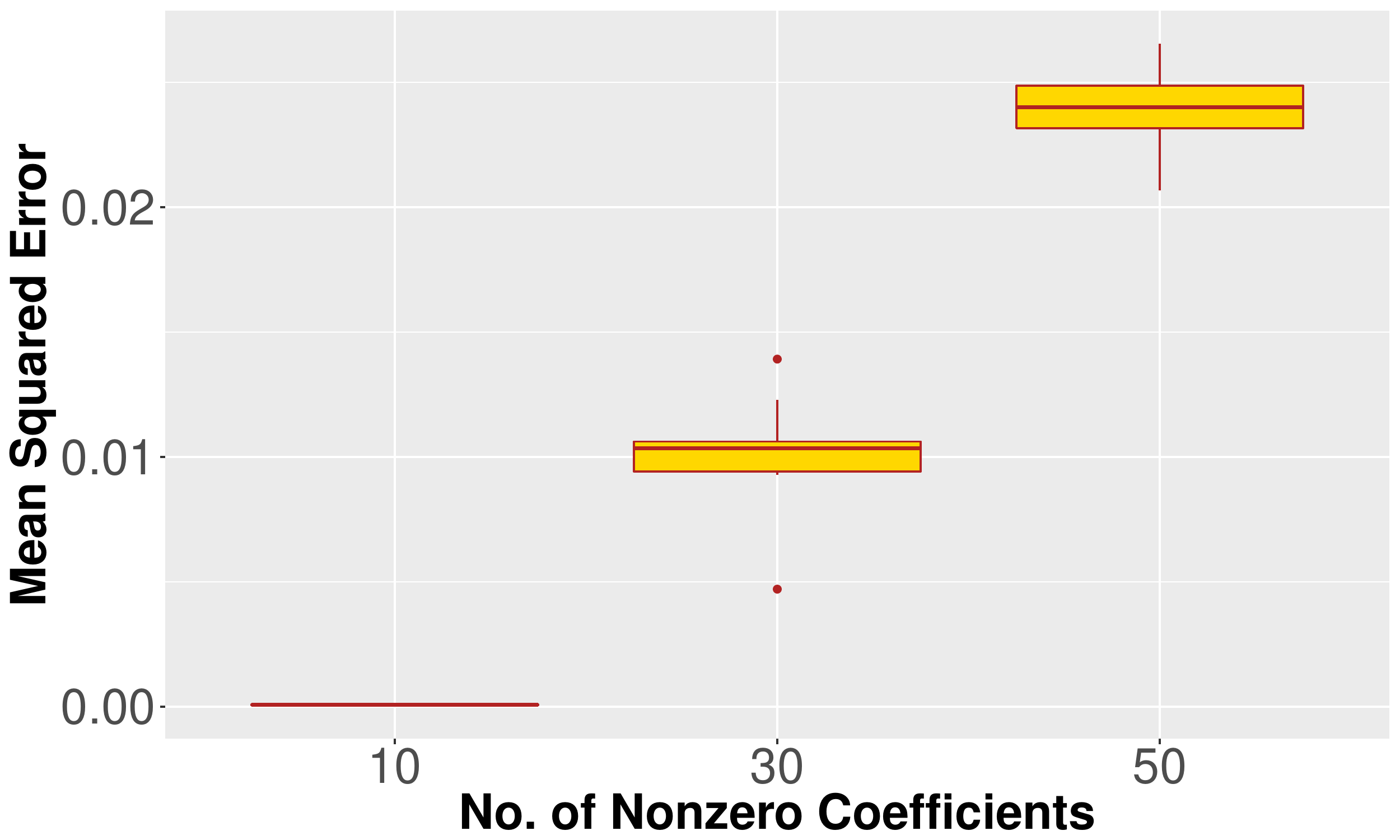}\label{AUC2}}
    \subfigure[MSE of $\bbeta$: MCP]{\includegraphics[width=4.0 cm]{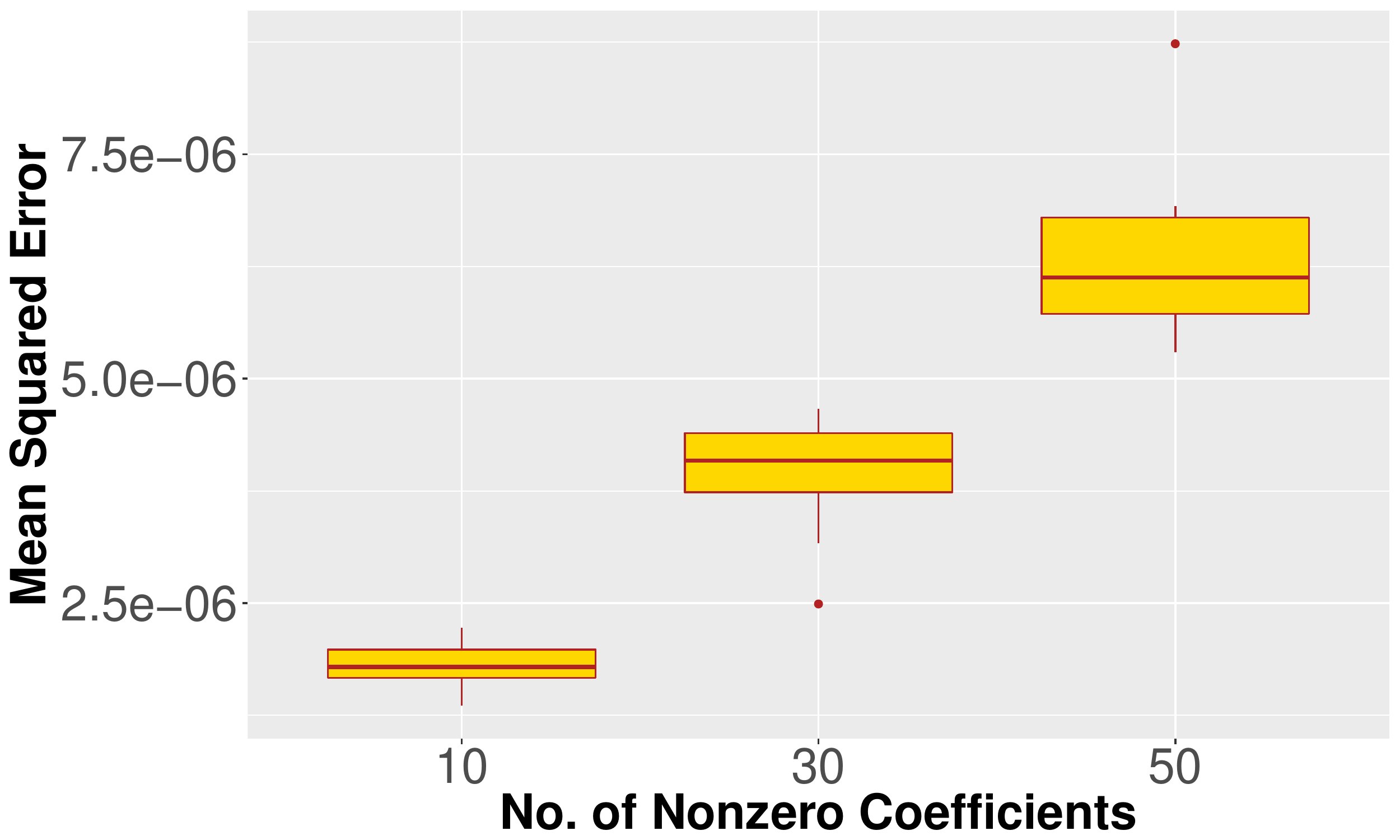}\label{AUC2}}\\
    \subfigure[MSE of nonzero $\bbeta$: CHS, $m=200$]{\includegraphics[width=4.0 cm]{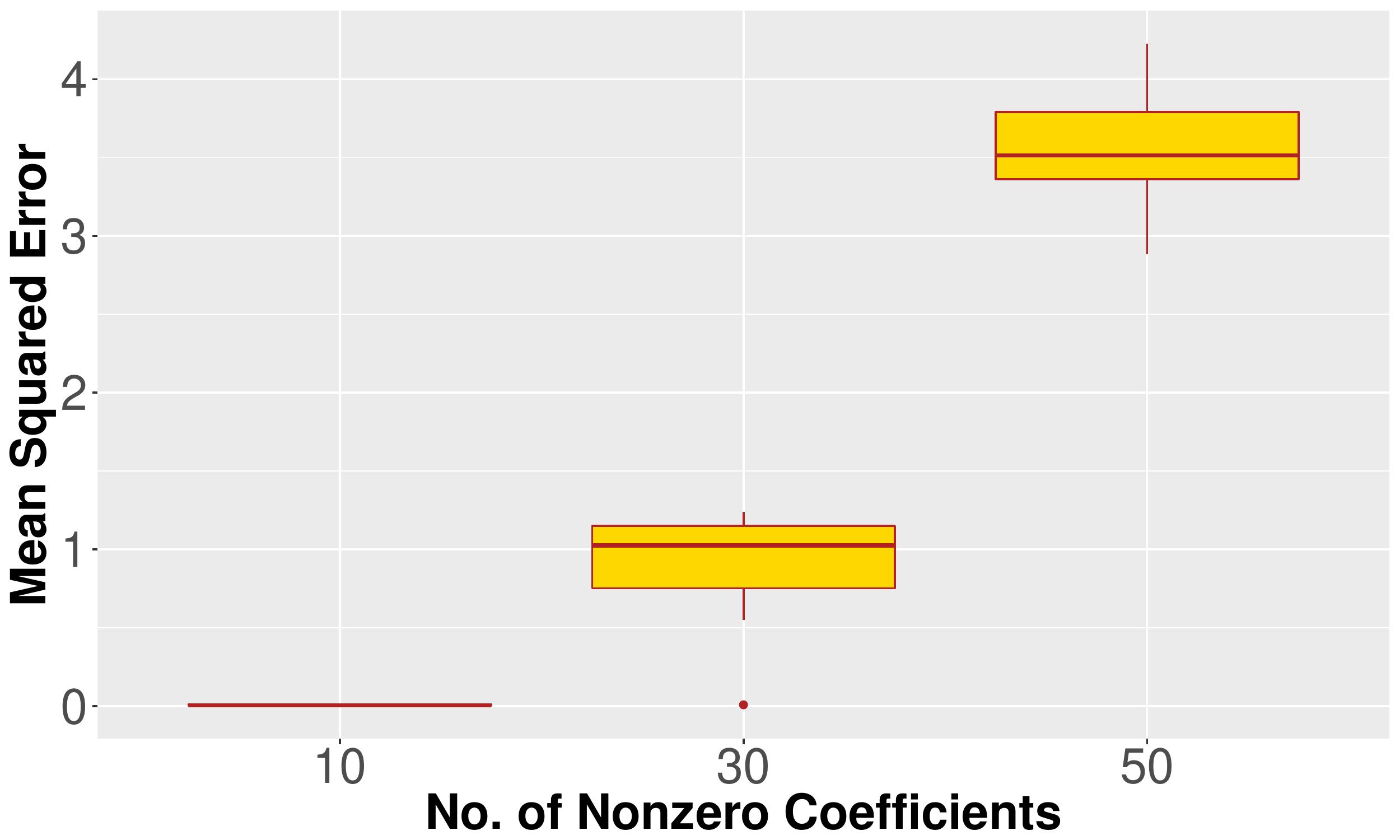}\label{AUC1}}
    \subfigure[MSE of nonzero $\bbeta$: PMCP, $m=200$]{\includegraphics[width=4.0 cm]{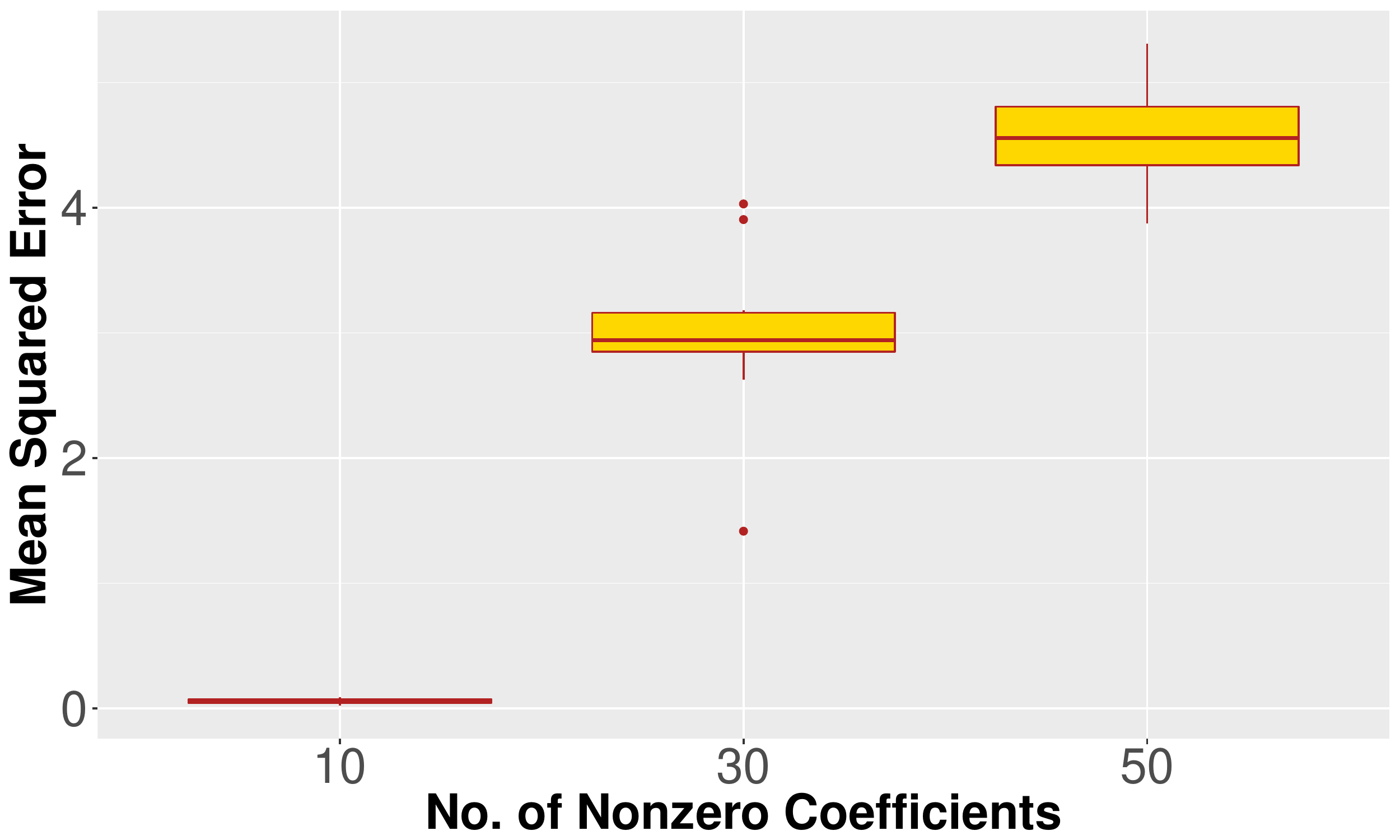}\label{AUC2}}
    \subfigure[MSE of nonzero $\bbeta$: MCP]{\includegraphics[width=4.0 cm]{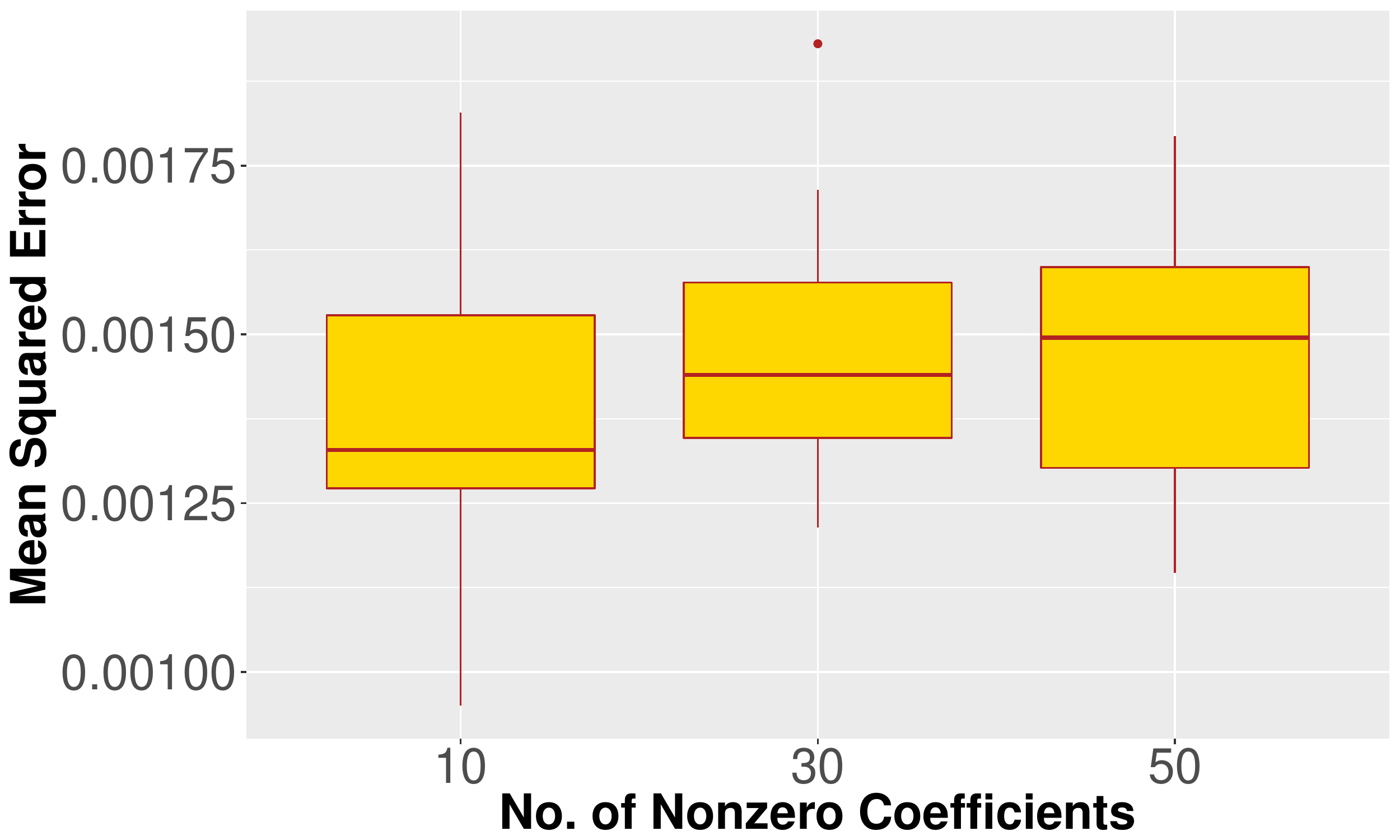}\label{AUC2}}\\
    \subfigure[MSE of $\bbeta$: CHS, $m=400$]{\includegraphics[width=4.0 cm]{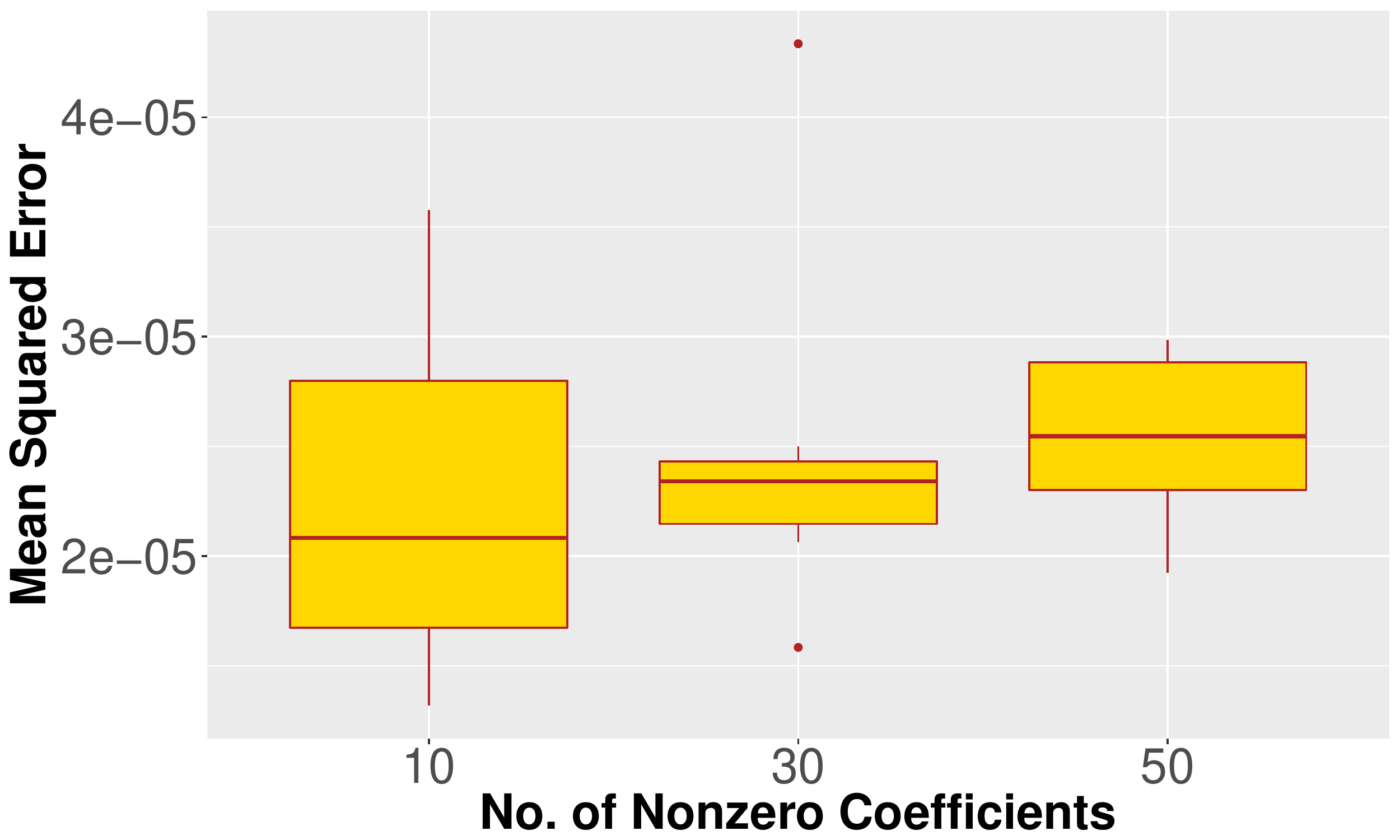}\label{AUC1}}
    \subfigure[MSE of $\bbeta$: PMCP, $m=400$]{\includegraphics[width=4.0 cm]{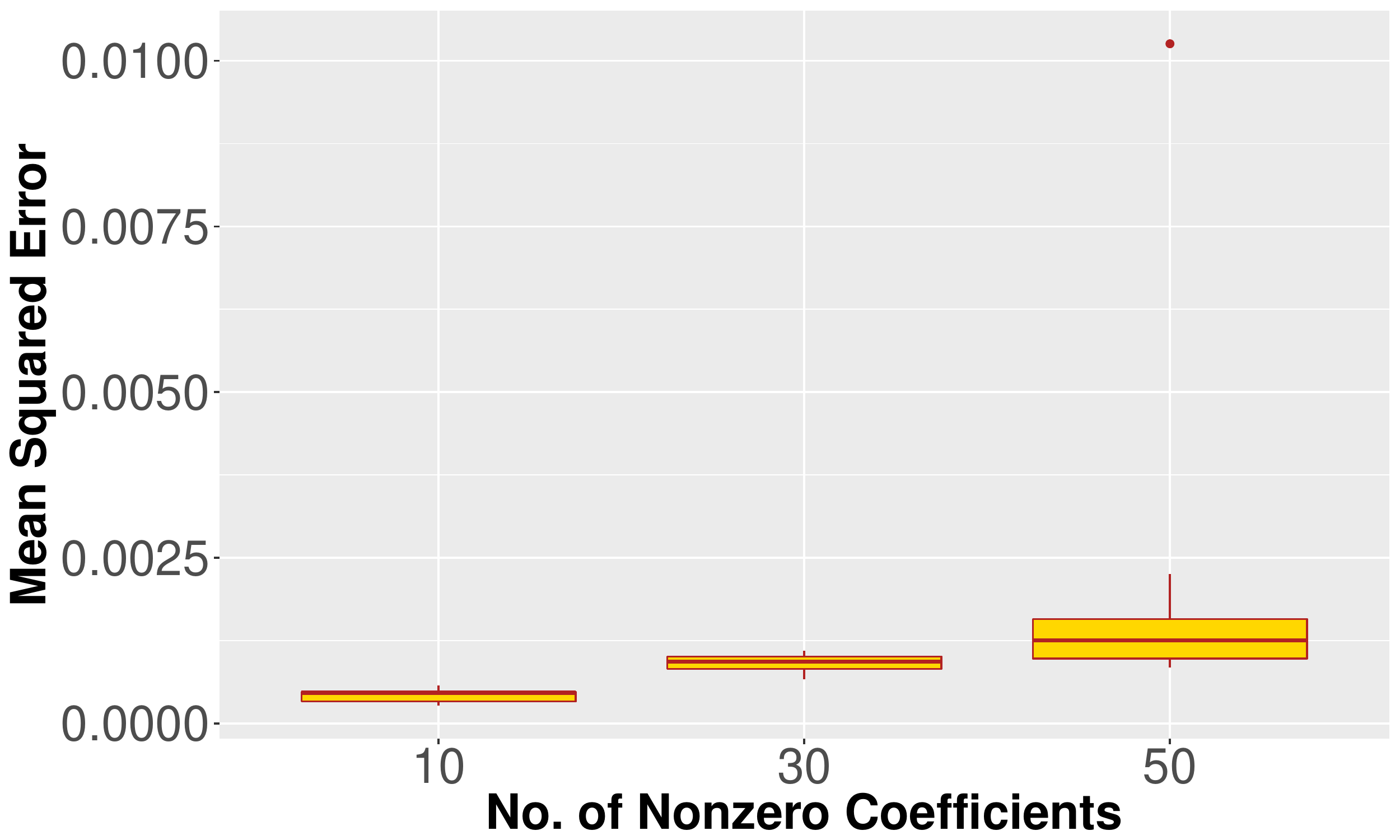}\label{AUC2}}
    \subfigure[MSE of $\bbeta$: MCP]{\includegraphics[width=4.0 cm]{MSE_m400_Lasso_compound.pdf}\label{AUC2}}\\
    \subfigure[MSE of nonzero $\bbeta$: CHS, $m=400$]{\includegraphics[width=4.0 cm]{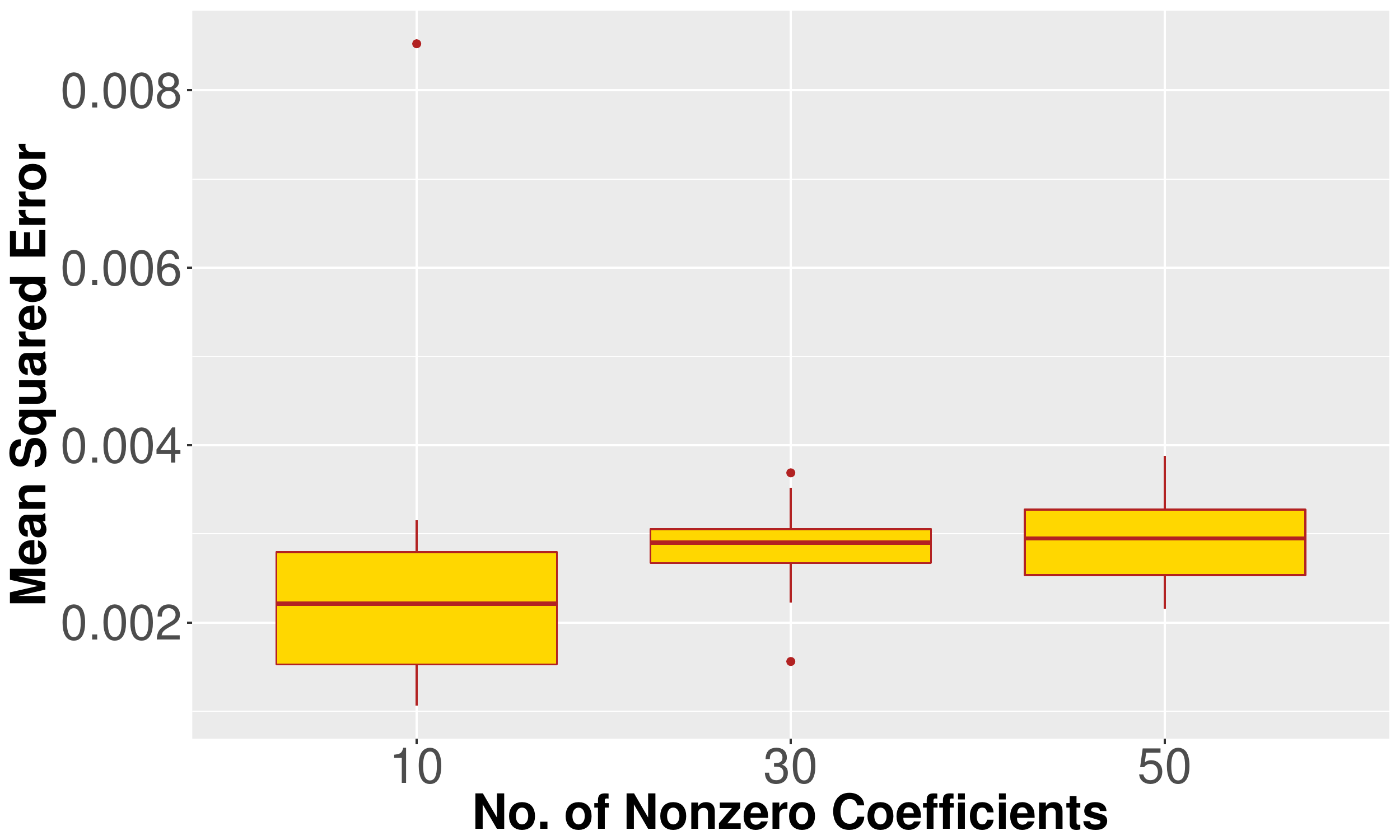}\label{AUC1}}
    \subfigure[MSE of nonzero $\bbeta$: PMCP, $m=400$]{\includegraphics[width=4.0 cm]{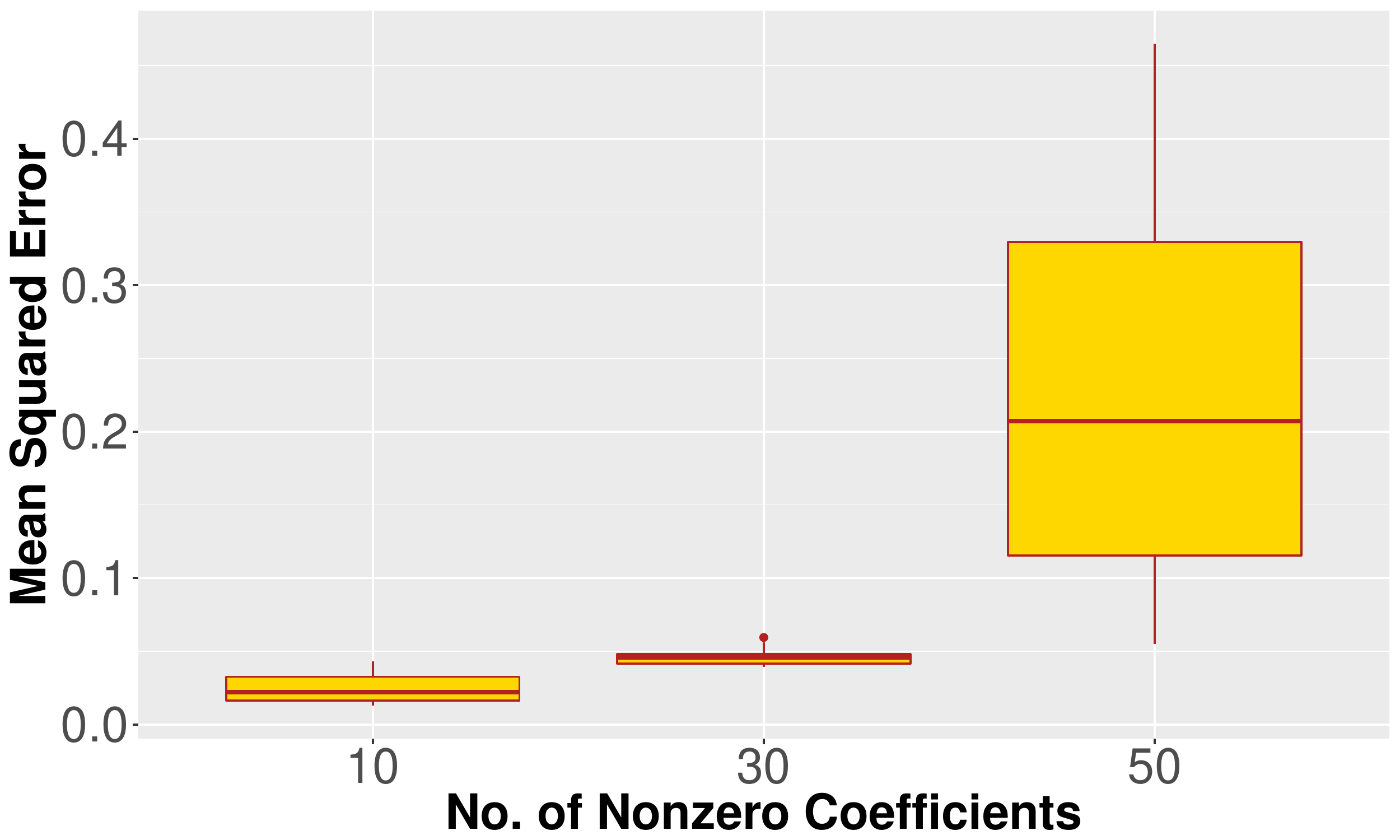}\label{AUC2}}
    \subfigure[MSE of nonzero $\bbeta$: MCP]{\includegraphics[width=4.0 cm]{MSE_nz_m400_Lasso_compound.pdf}\label{AUC2}}
 \caption{First and third row presenting mean squared error (MSE) of estimating the true predictor coefficient $\bbeta^*$ by a point estimate of $\beta$ from CHS, PMCP and MCP for $m=200$ and $m=400$ respectively. Second and fourth row presenting mean squared error (MSE) of estimating the true nonzero coefficients in $\bbeta^*$ by a point estimate of the corresponding coefficients in $\bbeta$ from CHS, PMCP and MCP for $m=200$ and $m=400$ respectively. All figures correspond to the scenarios where the predictors are generated under the compound correlation structure (Scenario 2). Each figure shows performance of a competitor under the data generated with $10$, $30$ and $50$ nonzero coefficients in $\bbeta^*$. }\label{Fig_MSE_com}
 \end{center}
\end{figure}
While accurate point estimation of $\bbeta^*$ is one of our primary objectives, characterizing uncertainty is of paramount importance given the recent developments in the frequentist literature on characterizing uncertainty in high dimensional regression \citep{javanmard2014confidence,van2014asymptotically,zhang2014confidence}. Although Bayesian procedures provide an automatic characterization of uncertainty, the resulting credible intervals may not possess the correct frequentist coverage in nonparametric/high-dimensional problems \citep{szabo2015frequentist}. To this end, an attractive adaptive property of the shrinkage priors, including horseshoe, is that the length of the intervals automatically adapt between the signal and noise variables, maintaining close to nominal coverage. It is important to see if this property is preserved under data sketching when the horseshoe prior is set on each component of $\bbeta$. Table~\ref{table2} shows that under $m=400$, 95\% credible intervals (CI) of all nonzero coefficients offer closely nominal coverage. While it is also true for $m=200$ and $s=10$, the coverage for nonzero coefficients tend to deteriorate as $s/m$ increases. Comparing the average length of 95\% CIs for all coefficients with the average length of 95\% CIs of nonzero coefficients, we observe that the posterior yields much narrower CIs for coefficients corresponding to the noise predictors. As demonstrated in some of the recent literature \citep{bhattacharya2016fast}, the frequentist procedures of constructing confidence intervals for high dimensional parameters \citep{javanmard2014confidence,van2014asymptotically,zhang2014confidence} in MCP yield approximately equal sized intervals for the signals and noise variables. Additionally, the tuning parameters in the frequentist procedure require substantial tuning to arrive at satisfactory coverage for the noise (though at the cost of under-covering the signals), while our Bayesian approach is naturally auto-tuned.

\begin{table}
\centering
{\small
\caption{Mean squared prediction error$\times 10^3$ for all the competing models under different simulation scenarios. MSPE is computed as $||\bX\hat{\bbeta}-\bX\bbeta^*||^2/n$ for all the competitors.}\label{table3}%
\begin{tabular}
[c]{c|ccc|ccc|ccc|ccc}
\hline
&\multicolumn{3}{c}{Scenario 1, $m=200$} & \multicolumn{3}{c}{Scenario 1, $m=400$} & \multicolumn{3}{c}{Scenario 2, $m=200$} & \multicolumn{3}{c}{Scenario 2, $m=400$}\\
\hline
Sparsity & $10$ & $30$ & $50$ & $10$ & $30$ & $50$ & $10$ & $30$ & $50$ & $10$ & $30$ & $50$ \\
\hline
CHS & 0.62 & 46.56 & 205.67 & 0.51 & 0.57 & 0.61 &  0.53 & 39.22 & 196.78 &  0.47 & 0.59 & 0.64\\
PMCP & 1.95 & 71.97  & 249.70 &  0.62 & 2.28 &  33.19 & 1.36 & 62.89  & 234.63 & 0.58 & 1.75 & 50.49\\
MCP  & 0.02 & 0.07  & 0.10 &  0.02 & 0.07 &  0.10 & 0.03 &  0.07 & 0.12 & 0.03 & 0.07 & 0.12\\
\hline
\end{tabular}

}
\end{table}

\begin{table}
\centering
{\small
\caption{Average coverage and average length of 95\% credible intervals of $\beta_j$ for CHS under different simulation cases. Here subscript $nz$ is added when the average coverage and average lengths are calculated for truly nonzero coefficients.}\label{table2}%
\begin{tabular}
[c]{c|ccc|ccc|ccc|ccc}
\hline
&\multicolumn{3}{c}{Scenario 1, $m=200$} & \multicolumn{3}{c}{Scenario 1, $m=400$} & \multicolumn{3}{c}{Scenario 2, $m=200$} & \multicolumn{3}{c}{Scenario 2, $m=400$}\\
\hline
Sparsity & $10$ & $30$ & $50$ & $10$ & $30$ & $50$ & $10$ & $30$ & $50$ & $10$ & $30$ & $50$ \\
\hline
Coverage & 0.99 & 0.99 & 0.98 & 0.99 & 0.99 & 0.98 &  0.99 & 0.99 & 0.98 &  0.99 & 0.99 & 0.98\\
Length & 0.02 & 0.08  & 0.17 &  0.02 & 0.03 &  0.03 & 0.01 &  0.11 & 0.17 & 0.01 & 0.02 & 0.03\\
Coverage$_{nz}$ & 0.97 & 0.86 & 0.68 & 0.95 & 0.97 & 0.97 & 0.98  & 0.89 & 0.63 & 0.95  & 0.96 & 0.95\\
Length$_{nz}$ & 5.72 & 5.93  & 5.19 &  5.53 & 5.90 &  5.83 & 5.49 & 6.59  & 4.42  & 5.51 & 5.79 & 5.77\\
\hline
\end{tabular}
}
\end{table}

\section{Real Data Application}\label{sec5}
To illustrate our approach, we present analysis of 
American College of Surgeons National Surgical Quality Improvement Program (ACS NSQIP) data. The ACS NSQIP is a nationally validated, risk-adjusted, outcomes-based program to measure and improve the quality of surgical care. Built by surgeons for surgeons, ACS NSQIP provides participating hospitals with tools, analyses, and reports to make informed decisions about improving quality of care. The data that we focus on consist of information on $n = 2,108$ subjects. For each subject, the variable of interest is days from operation to discharge, which acts as the response variable in our analysis. The remaining $p=1,771$ variables composed of biometric data, surgical codes, post procedure diagnosis along with sex, age, and smoking status interactions are considered as predictors. 

\begin{figure}
  \begin{center}
    \includegraphics[height=7.0 cm]{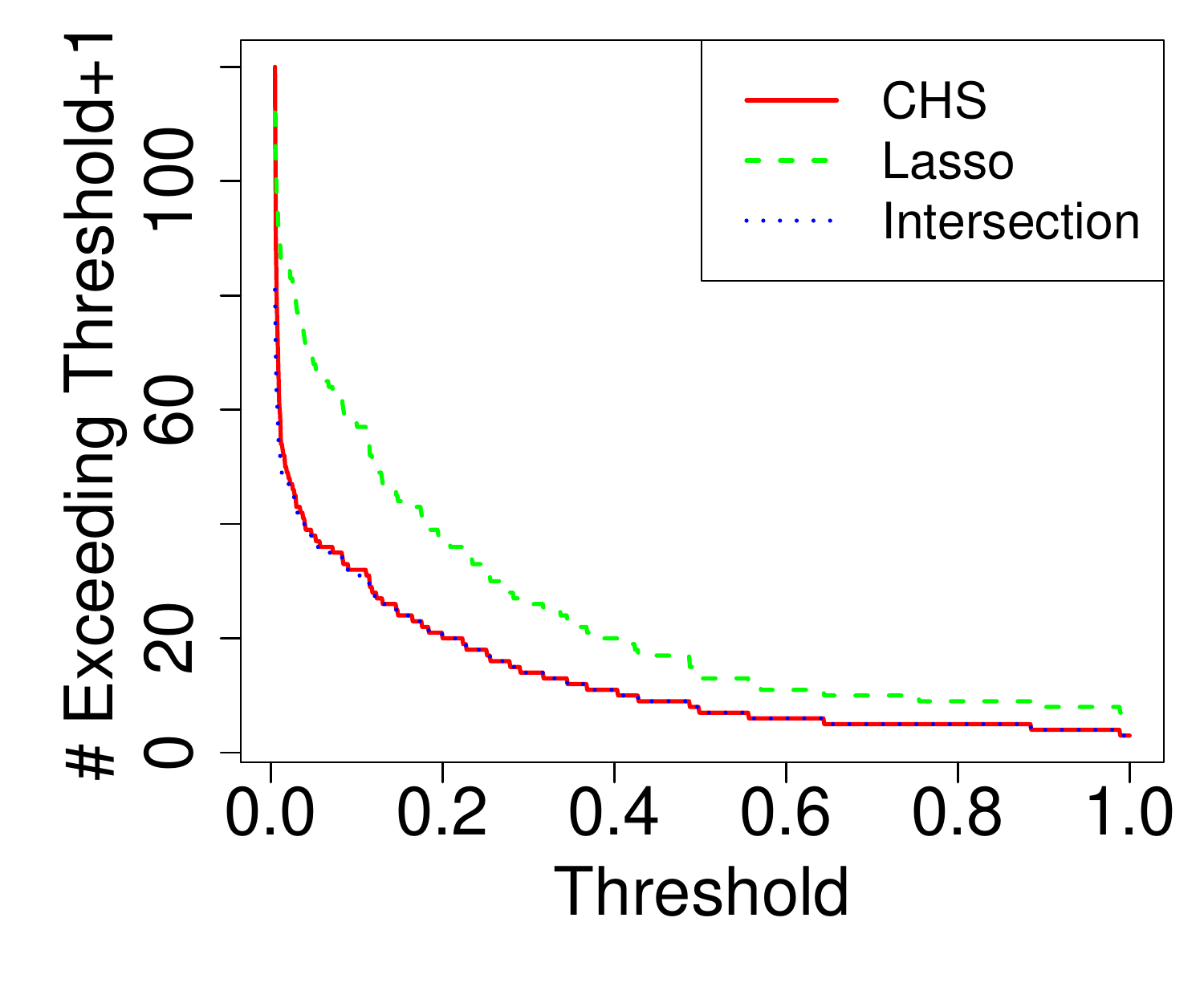}
 \caption{Plot of the number of variables for which $E[\beta_j|-]$ (from CHS) or $\hat{\beta}_j$ (where $\hat{\beta}_j$ is the lasso estimate of $\beta_j$) exceeds a threshold. The intersecting number of variables from these two methods which exceed the same threshold is also presented. We present the plot for different choices of the threshold ranging between $0.005$ to $1$.
 }\label{Fig_real_data}
 \end{center}
\end{figure}

We fit our compressed horseshoe approach with $m=500$ and compare the results with ordinary uncompressed lasso.
Figure 9 plots the number of entries of predictor coefficients for which the absolute value of the corresponding lasso or compressed horseshoe point estimates exceed a threshold between $0.005$ and $1$.
Also shown is the size of the intersection of these two sets. For smaller thresholds, the number of horseshoe point estimates exceeding the threshold is about the same as Lasso, while for larger thresholds, Lasso identifies more strong signal coefficients than CHS. The size of the intersection closely tracks the minimum size of the two sets, suggesting that coefficients are similarly ordered in Lasso and CHS, the only difference being Lasso estimating little higher magnitude for the coefficients. Finally, Lasso provides no notion of uncertainty in the selected variables such as that conveyed by the posterior marginals of the compressed horseshoe.

\section{Conclusion}\label{sec6}
This article presents a data sketching/compression approach in high dimensional linear regression with Gaussian scale mixture priors. The proposed approach ensures privacy of
the original data by revealing little information about it to the analyst. Additionally, it leads to a
massive reduction in computation for big $n$ and $p$. Simulation studies show advantage of data compression over naive sub-sampling of data, as well as
competitive performance of the approach with uncompressed data, especially in presence of a high degree of sparsity. Asymptotic results throw light on the interplay of
sparsity, dimension of the compression matrix, sample size and the number of features.

Although our approach is applied to the Horseshoe prior, it lends easy usage to any other Gaussian scale mixture prior, such as the Generalized Double Pareto \citep{armagan2013generalized} or the normal gamma prior \citep{griffin2010inference}. The data sketching approach also finds natural extension to high dimensional
binary or categorical regression using the data augmentation approach. While simulation studies show promising empirical performance of such an approach, we plan to put forth effort to develop theoretical results in a similar spirit as Section~\ref{sec3}. We also plan to extend the data sketching approach to high dimensional nonparametric models with big $n$ and $p$.

\section{Acknowledgement}
The research of Rajarshi Guhaniyogi is partially supported by grants from the Office of Naval Research (ONR-BAA N000141812741) and the National Science Foundation (DMS-1854662).


\section*{Appendix}
We begin by stating an important result from the random matrix theory, the proof of which is immediate following Theorem 5.31 and Corollary
5.35 of \cite{vershynin2010introduction}.
\begin{lemma}\label{lem1}
Consider the $m_n\times n$ compression matrix $\bPhi_n$ with each entry being drawn independently from $N(0,1/n)$. Then, almost surely
\begin{align}\label{compression}
\frac{(\sqrt{n}-\sqrt{m_n}-o(\sqrt{n}))^2}{n}\leq e_{min}(\bPhi_n\bPhi_n')\leq  e_{max}(\bPhi_n\bPhi_n')\leq \frac{(\sqrt{n}+\sqrt{m_n}+o(\sqrt{n}))^2}{n},
\end{align}
when both $m_n,n\rightarrow\infty$.
\end{lemma}

\begin{lemma}\label{lem2}
Let $P_{\bbeta_n^*}$ denotes the probability distribution of $\by_n$,
\begin{align}\label{eq:lem1}
\mathcal{H}_n=\left\{\by_n:\int \frac{f(\tilde{\by}_n|\bbeta_n)}{f(\tilde{\by}_n|\bbeta_n^*)} \pi(\bbeta_n)d\bbeta_n\leq \exp(-C_1m_n\epsilon_n^2)\right\},\:\:\mbox{for a constant}\:C_1>0.
\end{align}
Then $P_{\bbeta_n^*}(\mathcal{H}_n)\rightarrow 0$ as $n\rightarrow\infty$.
\end{lemma}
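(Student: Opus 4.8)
\noindent The plan is to run the classical prior-mass (evidence lower bound) argument: I show the marginal likelihood in the denominator of $\Pi_n$ cannot be exponentially small by restricting its integral to a shrinking product neighbourhood of $\bbeta_n^*$ on which the likelihood ratio is controlled and which carries prior mass at least $\exp(-c_3\,m_n\epsilon_n^2)$; throughout, $\epsilon_n$ is the rate of the appendix, with $m_n\epsilon_n^2\asymp s_n\log(p_n)\to\infty$. Under model~(\ref{model1}) both densities are Gaussian with identity covariance, so writing $\tilde{\by}_n=\tilde{\bX}_n\bbeta_n^*+\tilde{\bepsilon}_n$ with $\tilde{\bepsilon}_n=\bPhi_n\bepsilon_n\sim N(\bzero,\bPhi_n\bPhi_n')$ and $\bv=\bbeta_n-\bbeta_n^*$,
\begin{align*}
\log\frac{f(\tilde{\by}_n|\bbeta_n)}{f(\tilde{\by}_n|\bbeta_n^*)}=\tilde{\bepsilon}_n'\tilde{\bX}_n\bv-\tfrac12||\tilde{\bX}_n\bv||_2^2 .
\end{align*}
Take $B_n=\{\bbeta_n:|\beta_{j,n}-\beta_{j,n}^*|\le\gamma_n\text{ for }j\in\bxi^*,\ |\beta_{j,n}|\le a_n\text{ for }j\notin\bxi^*\}$ with $\gamma_n=\sqrt{\log(p_n)/(n s_n)}$, restrict the integral to $B_n$, factor out the prior mass $\pi_n(B_n)$, and apply Jensen's inequality to the average over the renormalised prior $\bar{\pi}_n$; this lower-bounds $\log\int f(\tilde{\by}_n|\bbeta_n)/f(\tilde{\by}_n|\bbeta_n^*)\,\pi_n(d\bbeta_n)$ by
\begin{align*}
\log\pi_n(B_n)+\tilde{\bepsilon}_n'\tilde{\bX}_n\bar{\bv}_n-\tfrac12\int_{B_n}||\tilde{\bX}_n\bv||_2^2\,\bar{\pi}_n(d\bbeta_n),\qquad\bar{\bv}_n:=\int_{B_n}\bv\,\bar{\pi}_n(d\bbeta_n),
\end{align*}
and it remains to show this is $\ge -C_1 m_n\epsilon_n^2$ with probability tending to one.

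For the quadratic term, since $B_n$ is a product set and $\pi_n$ factorises, $\int_{B_n}||\tilde{\bX}_n\bv||_2^2\bar{\pi}_n(d\bbeta_n)=\mathrm{tr}\big(\tilde{\bX}_n'\tilde{\bX}_n[\mathrm{Cov}_{\bar{\pi}_n}(\bv)+\bar{\bv}_n\bar{\bv}_n']\big)$ with $\mathrm{Cov}_{\bar{\pi}_n}(\bv)$ diagonal, entries $\le\gamma_n^2$ on $\bxi^*$ and $\le a_n^2$ off $\bxi^*$, while $\bar{\bv}_n$ is supported on $\bxi^*$ with $||\bar{\bv}_{n,\bxi^*}||_2^2\le s_n\gamma_n^2$ (the off-support means vanish because the shrinkage priors of interest are symmetric, as in Theorem~\ref{consistency_new}). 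Assumption~(A) gives $||\bX_{n,j}||_2^2\le n$ and Lemma~\ref{lem1}/Assumption~(B) give $e_{\max}(\bPhi_n\bPhi_n')=1+o(1)$, hence $||\tilde{\bX}_{n,j}||_2^2\le(1+o(1))n$, $\mathrm{tr}(\tilde{\bX}_n'\tilde{\bX}_n)\le(1+o(1))np_n$, and $e_{\max}(\tilde{\bX}_{n,\bxi^*}'\tilde{\bX}_{n,\bxi^*})\le(1+o(1))\mathrm{tr}(\bX_{n,\bxi^*}'\bX_{n,\bxi^*})\le(1+o(1))ns_n$; substituting bounds the quadratic term by a constant $c_1$ times $a_n^2np_n+ns_n^2\gamma_n^2=\frac{n}{m_np_n}\,s_n\log(p_n)+s_n\log(p_n)=O(m_n\epsilon_n^2)$, where I use $n\lesssim m_np_n$ (automatic when $p_n\gtrsim n$, and compatible with Assumption~(C)). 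The cross term $\tilde{\bepsilon}_n'\tilde{\bX}_n\bar{\bv}_n$, conditionally on $\bPhi_n$, is one mean-zero Gaussian with variance $(\tilde{\bX}_n\bar{\bv}_n)'(\bPhi_n\bPhi_n')(\tilde{\bX}_n\bar{\bv}_n)\le(1+o(1))||\tilde{\bX}_{n,\bxi^*}\bar{\bv}_{n,\bxi^*}||_2^2=O(m_n\epsilon_n^2)$, so a Gaussian tail bound gives $|\tilde{\bepsilon}_n'\tilde{\bX}_n\bar{\bv}_n|\le\delta m_n\epsilon_n^2$ outside an event of probability $\le2\exp(-c_2\delta^2 m_n\epsilon_n^2)\to0$, for any fixed $\delta>0$. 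Finally, $\log\pi_n(B_n)=\sum_{j\notin\bxi^*}\log\int_{-a_n}^{a_n}h_{\mu_n}+\sum_{j\in\bxi^*}\log\int_{\beta_{j,n}^*-\gamma_n}^{\beta_{j,n}^*+\gamma_n}h_{\mu_n}$: by Assumption~(F) the first sum is $\ge p_n\log(1-p_n^{-(1+u)})\ge-2p_n^{-u}=o(1)$, and by Assumptions~(E) and~(G), $[\beta_{j,n}^*-\gamma_n,\beta_{j,n}^*+\gamma_n]\subset[-M_n,M_n]$ for large $n$ and $\int_{\beta_{j,n}^*-\gamma_n}^{\beta_{j,n}^*+\gamma_n}h_{\mu_n}\ge2\gamma_n\inf_{x\in[-M_n,M_n]}h_{\mu_n}(x)\ge2\gamma_n p_n^{-D_0}$, so the second sum is $\ge s_n(\log2+\log\gamma_n-D_0\log p_n)\ge-c_3 m_n\epsilon_n^2$ since $\log\gamma_n=O(\log p_n)$ (here using the mild condition $\log n=O(\log p_n)$).

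Combining the three bounds, on an event of probability tending to one the displayed lower bound is at least $-(c_3+\delta+\tfrac12 c_1)m_n\epsilon_n^2=:-C_1 m_n\epsilon_n^2$, i.e.\ $\int f(\tilde{\by}_n|\bbeta_n)/f(\tilde{\by}_n|\bbeta_n^*)\,\pi_n(d\bbeta_n)\ge\exp(-C_1m_n\epsilon_n^2)$, whence $P_{\bbeta_n^*}(\mathcal{H}_n)\to0$. I expect the quadratic term to be the main obstacle: the noise cube $\{|\beta_{j,n}|\le a_n:j\notin\bxi^*\}$ is a genuinely large set, and the crude estimate $||\tilde{\bX}_n\bv||_2^2\le e_{\max}(\bPhi_n\bPhi_n')||\bX_n\bv||_2^2\le(1+o(1))n||\bv||_1^2$ is off by the fatal factor $n/m_n$; the fix is to integrate against $\bar{\pi}_n$ \emph{before} bounding, which through the trace identity replaces the worst-case operator norm of $\bPhi_n$ by its averaged action $\mathbb{E}||\bPhi_n\bw||_2^2=(m_n/n)||\bw||_2^2$, together with the polynomially small choice of $\gamma_n$ so that the signal block contributes at the same order $s_n\log p_n$. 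All of the above is carried out on the intersection of the probability-one event of Lemma~\ref{lem1} and the high-probability event of Assumption~(B); Assumption~(D) plays no role at this step.
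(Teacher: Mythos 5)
Your proof is essentially correct, but it takes a genuinely different route from the paper's. The paper reduces the claim to a Kullback--Leibler prior-mass condition: it invokes Lemma 10 of Ghosal and van der Vaart (2007), computes $K(f(\tilde{\by}_n|\bbeta_n^*),f(\tilde{\by}_n|\bbeta_n))$ and $V(\cdot,\cdot)$ explicitly for the correlated-noise true model, disposes of the $\bPhi_n\bPhi_n'$ eigenvalue terms via Lemma~\ref{lem1} and Assumption (B), and then lower-bounds the prior mass of an $L_1$ ball $\{||\bbeta_n-\bbeta_n^*||_1\leq 2\tilde{C}_3\theta_n\}$ by the same split into on-support and off-support coordinates that you use. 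You instead run a self-contained Jensen/evidence-lower-bound argument on a product neighbourhood, which buys two things: it avoids citing the external testing-framework lemma, and it replaces the paper's operator-norm passage from the $\tilde{\bX}_n'\tilde{\bX}_n/m_n$-weighted ball to a Euclidean ball (justified there via $e_{max}(\bX_n'\bX_n)\leq np_n$, which silently introduces an $np_n/m_n$ factor) by the sharper trace identity $\int_{B_n}||\tilde{\bX}_n\bv||_2^2\,\bar{\pi}_n = \mathrm{tr}(\tilde{\bX}_n'\tilde{\bX}_n[\mathrm{Cov}_{\bar{\pi}_n}(\bv)+\bar{\bv}_n\bar{\bv}_n'])$, so the noise coordinates contribute only through column norms. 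The price is a handful of side conditions the paper does not state: you need $n\lesssim m_np_n$ and $\log n=O(\log p_n)$ (both mild and standard in this regime, and implicitly needed by the paper's own radius $\theta_n/s_n$ as well), and you need symmetry of $h_{\mu_n}$ so that the off-support means of $\bar{\bv}_n$ vanish --- without that, the $\bar{\bv}_n\bar{\bv}_n'$ term picks up an $np_n^2a_n^2 = ns_n\log(p_n)/m_n$ contribution that is not $O(m_n\epsilon_n^2)$. Symmetry holds for every Gaussian scale mixture prior the paper considers and is explicitly assumed in Theorem~\ref{consistency_new}, but it is not among Assumptions (A)--(G), so it should be stated. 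One further small point: as with the paper's version, the conclusion holds for $C_1$ exceeding the constant produced by the argument, not for an arbitrary prescribed $C_1>0$; this is harmless given how the lemma is consumed in Condition (ii) of the main proof.
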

\begin{proof}
Denote $\tilde{\by}_n=\bPhi_n \by_n$ and $\tilde{\bX}_n=\bPhi_n \bX_n$. For two densities $g_1,g_2$, denote $K(g_1,g_2)=\int g_1\log(g_1/g_2)$ and 
$V(g_1,g_2)=\int g_1((\log(g_1/g_2))-K(g_1,g_2))^2$. Define
\begin{align}
\mathcal{H}_{1n}=\left\{K(f(\tilde{\by}_n|\bbeta_n^*),f(\tilde{\by}_n|\bbeta_n))\leq m_n\epsilon_n^2,\:\:V(f(\tilde{\by}_n|\bbeta_n^*),f(\tilde{\by}_n|\bbeta_n))\leq m_n\epsilon_n^2\right\}.
\end{align}
By Lemma 10 of \cite{ghosal2007convergence}, to show (\ref{eq:lem1}) it is enough to show that $\Pi(\mathcal{H}_{1n})=O(\exp(-C_2m_n\epsilon_n^2))$, for some constant $C_2>0$.
Let $e_{k,n}$, $1\leq k\leq m_n$ be the ordered eigenvalues of $(\bPhi_n\bPhi_n')^{-1}$. Then with little algebra, we obtain
\begin{align}\label{eq2}
K(f(\tilde{\by}_n|\bbeta_n^*),f(\tilde{\by}_n|\bbeta_n))&=\frac{1}{2}\left\{\sum_{k=1}^{m_n}(e_{k,n}-1-\log(e_{k,n}))+||\bPhi_n\bX_n(\bbeta_n-\bbeta_n^*)||^2\right\}\nonumber\\
V(f(\tilde{\by}_n|\bbeta_n^*),f(\tilde{\by}_n|\bbeta_n))&=\sum_{k=1}^{m_n}\frac{(1-e_{k,n})^2}{2}+||(\bPhi_n\bPhi_n')^{-1}(\bPhi_n\bX_n(\bbeta_n-\bbeta_n^*)||^2.
\end{align}
Expanding $\log(e_{k,n})$ in the powers of $(1-e_{k,n})$ and using Lemma 1 of \cite{jeong2020unified},  $(e_{k,n}-1-\log(e_{k,n}))-(1-e_{k,n})^2/2\rightarrow 0$, as $n\rightarrow\infty$. Another use of Lemma 1 of
\cite{jeong2020unified} yields, $\sum_{k=1}^{m_n}(1-e_{k,n})^2\leq\tilde{C}_1 ||\bI-\bPhi_n\bPhi_n'||_F^2\leq\tilde{C}_2 m_n/n\leq m_n\theta_n^2$, for some constants $\tilde{C}_1,\tilde{C}_2>0$, and for all large $n$. Using Lemma~\ref{lem1},
$e_{k,n}\rightarrow 1$, as $n\rightarrow\infty$, for all $k=1,...,m_n$. Thus, from (\ref{eq2}), for some constant $C_3>0$,
\begin{align}\label{prob_KL}
\Pi(\mathcal{H}_{1n})&\geq\Pi\left(\left\{\bbeta_n: ||\bPhi_n\bX_n(\bbeta_n-\bbeta_n^*)||^2\leq C_3 m_n\theta_n^2\right\}\right)\nonumber\\
&=\Pi\left(\bbeta_n: (\bbeta_n-\bbeta_n^*)'\bX_n'\bPhi_n'\bPhi_n\bX_n(\bbeta_n-\bbeta_n^*)\leq C_3 m_n\theta_n^2\right)\nonumber\\
&=\Pi\left(\bbeta_n: (\bbeta_n-\bbeta_n^*)'(\tilde{\bX}_n'\tilde{\bX}_n/m_n)(\bbeta_n-\bbeta_n^*)\leq C_3\theta_n^2\right)\nonumber\\
&\geq\Pi\left(\bbeta_n: (\bbeta_n-\bbeta_n^*)'(\bbeta_n-\bbeta_n^*)\leq\eta^{-1}C_3\theta_n^2\right)\nonumber\\
&\geq \Pi\left(\bbeta_n: ||\bbeta_n-\bbeta_n^*||_1\leq 2\tilde{C}_3\theta_n\right),
\end{align}
where the inequality in the fourth line follows from the fact that there exists $\tilde{\eta}>0$ such that $||\tilde{\bX}_n\bv||_2^2\leq \tilde{\eta}||\bX_n v||_2^2$, for all $\bv$ \citep{ahfock2017statistical}. This implies that 
$e_{max}(\tilde{\bX}_n'\tilde{\bX}_n)=\sup\limits_{||\bv||_2=1}\bv'\tilde{\bX}_n'\tilde{\bX}_n\bv\leq\tilde{\eta} \sup\limits_{||\bv||_2=1}\bv'\bX_n'\bX_n\bv=\tilde{\eta}e_{max}(\bX_n'\bX_n).$ The inequality then follows by Assumption (A) that ensures 
$e_{max}(\bX_n'\bX_n)\leq np_n$.

Now,
$\{\bbeta_n:||\bbeta_n-\bbeta_n^*||_1<2\tilde{C}_3\theta_n\}\supset \{|\beta_{j,n}|\leq \tilde{C}_3\theta_n/p_n,\:\forall\:j\notin\bxi^*\}\cap\{|\beta_{j,n}-\beta_{j,n}^*|\leq \tilde{C}_3\theta_n/s_n\:\forall\:j\in\bxi^*\}$.
Now, $\Pi(|\beta_{j,n}|\leq \tilde{C}_3\theta_n/p_n,\:\forall\:j\notin\bxi^*)\geq \prod_{j\notin\bxi^*}\Pi(|\beta_{j,n}|\leq a_n)\geq (1-p_n^{-1-u})^{p_n}\rightarrow 1$, as $n\rightarrow \infty$. Here the first inequality follows as $a_n=\sqrt{s_n\log(p_n)/m_n}/p_n$ and $s_n\log(p_n)/m_n\rightarrow 0$. The second inequality follows by Assumption (F).
On the other hand, $\Pi(|\beta_{j,n}-\beta_{j,n}^*|\leq \tilde{C}_3\theta_n/s_n,\:\forall\:j\in\bxi^*)\geq (2\tilde{C}_3\theta_n/s_n\inf\limits_{[-M_n,M_n]}h_{\mu_n}(x))^{s_n}$, which holds for all large $n$ as $|\beta_{j,n}^*|<M_n/2$ and $\tilde{C}_3\theta_n/s_n\rightarrow 0$ as $n\rightarrow\infty$. Thus,
$-\log(\Pi(|\beta_{j,n}-\beta_{j,n}^*|\leq \tilde{C}_3\theta_n/s_n,\:\forall\:j\in\bxi^*))\leq O(s_n\log(p_n))=O(m_n\theta_n^2)$, by Assumptions (C) and (G). This proves our result.
\end{proof}

\noindent\textbf{Proof of Theorem~\ref{thm_main}}\\
\begin{proof}
Denote $\tilde{\by}_n=\bPhi_n \by_n$ and $\tilde{\bX}_n=\bPhi_n \bX_n$ and consider the following conditions,
\begin{enumerate}
\item \textbf{Condition (i):} $\exists$ a test function $\kappa_n$ s.t.
\begin{align*}
E_{\bbeta_n^*}(\kappa_n)\leq \exp(-\tilde{c}_3m_n\theta_n^2),\:\:\:\sup\limits_{\bbeta\in\mathcal{C}_n}E_{\bbeta_n}(1-\kappa_n)\leq \exp(-\tilde{c}_4m_n\theta_n^2),
\end{align*}
for some constants $\tilde{c}_3,\tilde{c}_4>0$, respectively.
\item \textbf{Condition (ii):} For $\mathcal{H}_n=\left\{\by_n:\int \frac{f(\tilde{\by}_n|\bbeta_n)}{f(\tilde{\by}_n|\bbeta_n^*)} \pi(\bbeta_n)d\bbeta_n\geq \exp(-\tilde{c}_6m_n\epsilon_n^2)\right\}$, $P_{\bbeta_n^*}(\mathcal{H}_n)\rightarrow 1$, as $n\rightarrow\infty$, for some $0<\tilde{c}_6<\tilde{c}_4$.
\end{enumerate}
We begin by showing that Conditions (i)-(ii) are sufficient to prove $E_{\bbeta_n^*}\Pi(\mathcal{C}_n)\rightarrow 0$, as $m_n,n\rightarrow\infty$. 
Note that
\begin{align}\label{eq_initial}
E_{\bbeta_n^*}\Pi(\mathcal{C}_n)&\leq E_{\bbeta_n^*}[\kappa_n]+E_{\bbeta_n^*}\left[\frac{(1-\kappa_n)\int_{\mathcal{C}_n}\frac{f(\tilde{\by}_n|\bbeta_n)}{f(\tilde{\by}_n|\bbeta_n^*)}\pi_n(\bbeta_n)d\bbeta_n}{\int \frac{f(\tilde{\by}_n|\bbeta_n)}{f(\tilde{\by}_n|\bbeta_n^*)}\pi_n(\bbeta_n)d\bbeta_n}1_{\by_n\in\mathcal{H}_n}\right]+P_{\bbeta_n^*}(\mathcal{H}_n^c)\nonumber\\
&\leq E_{\bbeta_n^*}[\kappa_n]+\sup\limits_{\bbeta_n\in\mathcal{C}_n}E_{\bbeta_n}[(1-\kappa_n)]\Pi(\mathcal{C}_n)\exp(\tilde{c}_6m_n\theta_n^2)+P_{\bbeta_n^*}(\mathcal{H}_n^c)\nonumber\\
&\leq E_{\bbeta_n^*}[\kappa_n]+\sup\limits_{\bbeta_n\in\mathcal{C}_n}E_{\bbeta_n}[(1-\kappa_n)]\exp(\tilde{c}_6m_n\theta_n^2)+P_{\bbeta_n^*}(\mathcal{H}_n^c),
\end{align}
where the inequality in the second line follows from Condition (ii). Condition (i) can now be applied to show that $E_{\bbeta_n^*}\Pi(\mathcal{C}_n)\rightarrow 0$, as $n\rightarrow \infty$.

It remains to prove Conditions (i)-(ii) which we prove below.\\
\underline{\textbf{Proof of Condition (i):}}\\
Define a sequence of test functions\\ $\kappa_n=\max_{\bxi\supset\bxi^*,|\bxi|\leq s_n+\tilde{s}_n}1\{||(\tilde{\bX}_{n,\bxi}'\tilde{\bX}_{n,\bxi})^{-1}\tilde{\bX}_{n,\bxi}'\tilde{\by}_n-\bbeta_{n,\bxi}^*||_2\geq \theta_n\}$, where $\tilde{s}_n$ is defined later. Let $\hat{\bbeta}_{n,\bxi}=(\tilde{\bX}_{n,\bxi}'\tilde{\bX}_{n,\bxi})^{-1}\tilde{\bX}_{n,\bxi}'\tilde{\by}_n$. Then
\begin{align*}
& E_{\bbeta_n^*}(\kappa_n)\leq \sum_{\bxi\supset\bxi^*,|\bxi|\leq s_n+\tilde{s}_n}P_{\bbeta_n^*}(||\hat{\bbeta}_{n,\bxi}-\bbeta_{n,\bxi}^*||_2\geq \theta_n)\\
&=\sum_{\bxi\supset\bxi^*,|\bxi|\leq s_n+\tilde{s}_n}P_{\bbeta_n^*}((\hat{\bbeta}_{n,\bxi}-\bbeta_{n,\bxi}^*)'(\hat{\bbeta}_{n,\bxi}-\bbeta_{n,\bxi}^*)\geq \theta_n^2)\\
&\leq \sum_{\xi\supset\bxi^*,|\bxi|\leq s_n+\tilde{s}_n}P_{\bbeta_n^*}\left((\hat{\bbeta}_{n,\bxi}-\bbeta_{n,\bxi}^*)'\tilde{\bX}_{n,\bxi}'(\bPhi_n\bPhi_n')^{-1}\tilde{\bX}_{n,\bxi}(\hat{\bbeta}_{n,\bxi}-\bbeta_{n,\bxi}^*)\geq \frac{\tilde{C}_4\theta_n^2m_nn}{(\sqrt{m_n}+\sqrt{n}+o(\sqrt{n}))^2}\right)\\
& \leq \sum_{\bxi\supset\bxi^*,|\bxi|\leq s_n+\tilde{s}_n}P_{\bbeta_n^*}\left(\chi_{|\bxi|}^2\geq \frac{\tilde{C}_4\theta_n^2m_nn}{(\sqrt{m_n}+\sqrt{n}+o(\sqrt{n}))^2}\right)\\
& \leq \sum_{\bxi\supset\bxi^*,|\bxi|\leq s_n+\tilde{s}_n}P_{\bbeta_n^*}(\chi_{|\bxi|}^2\geq (1-\delta)\tilde{C}_4\theta_n^2m_n)
\leq {p_n \choose \tilde{s}_n+s_n}\exp(-2\tilde{c}_3\theta_n^2m_n)\leq \exp(-\tilde{c}_3\theta_n^2m_n),
\end{align*}
for some constant $\tilde{c}_3>0$, where $\chi_{|\bxi|}^2$ is a $\chi^2$ random variable with $|\bxi|$ degrees of freedom.
Here the inequality in the third line follows from two results. First, by Lemma~\ref{lem1}, $e_{min}((\bPhi_n\bPhi_n')^{-1})\geq n/(\sqrt{n}+\sqrt{m_n}+o(\sqrt{n}))^2$ almost surely. Second, $e_{min}(\tilde{\bX}_{n,\bxi}'\tilde{\bX}_{n,\bxi}/m_n)\geq e_{min}(\bX_{n,\bxi}'\bX_{n,\bxi}/n) \tilde{\eta}$, for some $\tilde{\eta}>0$, by \cite{ahfock2017statistical}. Thus, using Assumption (D), it follows that $e_{min}(\tilde{\bX}_{n,\bxi}'\tilde{\bX}_{n,\bxi}/m_n)\geq \tilde{C}_4$, for some constant $\tilde{C}_4>0$ and for all $\bxi\supset\bxi^*$ such that $|\bxi|\leq s_n+\tilde{s}_n$. The first inequality in the fifth line follows due to the fact that
$n/(\sqrt{m_n}+\sqrt{n}+o(\sqrt{n}))^2\rightarrow 1$ as $n\rightarrow\infty$. Hence $n/(\sqrt{m_n}+\sqrt{n}+o(\sqrt{n}))^2\geq 1-\delta$ for some $\delta\in (0,1)$, for all large $n$. The second inequality in the fifth line in obtained by applying the Bernstein inequality \citep{song2017nearly}. To accomplish the third inequality in the fifth line, we set $\tilde{s}_n=\frac{\tilde{c}_3m_n\theta_n^2}{2\log(p_n)}.$ Such an $\tilde{s}_n$ exists since $s_n=o(\bar{s}_n\log(p_n))$, by assumption (C). The inequality is then obtained by the fact that ${p_n \choose \tilde{s}_n+s_n}\leq p_n^{\tilde{s}_n+s_n}\leq \exp((\tilde{s}_n+s_n)\log(p_n))\leq \exp(\tilde{c}_3m_n\theta_n^2)$, using assumptions (C).

Consider $\bzeta=\bxi^*\cup\{j:|\beta_{j,n}|\geq a_n\}$. Then $\bzeta\in\{\bxi: \bxi\supset\bxi^*,|\bxi|\leq s_n+\tilde{s}_n\}$. Then
\begin{align*}
\sup\limits_{\bbeta_n\in\mathcal{C}_n}E_{\bbeta_n}(1-\kappa_n)\leq \sup\limits_{\bbeta_n\in\mathcal{C}_n}\{1-P_{\bbeta_n}(||\hat{\bbeta}_{n,\bzeta}-\bbeta_{n,\bzeta}^*||_2\geq \theta_n)\}
=\sup\limits_{\bbeta_n\in\mathcal{C}_n}P_{\bbeta_n}(||\hat{\bbeta}_{n,\bzeta}-\bbeta_{n,\bzeta}^*||_2\leq \theta_n).
\end{align*}
Under $\mathcal{C}_n$, $||\bbeta_{n,\bzeta}-\bbeta_{n,\bzeta}^*||_2\geq ||\bbeta_{n}-\bbeta_{n}^*||_2-||\bbeta_{n,\bzeta^c}-\bbeta_{n,\bzeta^c}^*||_2
\geq 3\theta_n-a_np_n\geq 2\theta_n$. Here the last inequality follows due to the fact that $\bbeta_{n,\bzeta^c}^*=0$ and for any $j\in\bzeta^c$, $|\beta_{n,j}|\leq a_n$ and $a_n=\theta_n/p_n$. Using the above fact, we have
\begin{align*}
&\sup\limits_{\bbeta_n\in\mathcal{C}_n}P_{\bbeta_n}(||\hat{\bbeta}_{n,\bzeta}-\bbeta_{n,\bzeta}^*||_2\leq \theta_n)
\leq\sup\limits_{\bbeta_n\in\mathcal{C}_n} P_{\bbeta_n}(||\hat{\bbeta}_{n,\bzeta}-\bbeta_{n,\bzeta}||_2\geq ||\bbeta_{n,\bzeta}-\bbeta_{n,\bzeta}^*||_2-\theta_n)\\
&=\sup\limits_{\bbeta_n\in\mathcal{C}_n} P_{\bbeta_n}(||\hat{\bbeta}_{n,\bzeta}-\bbeta_{n,\bzeta}||_2\geq\theta_n)\\
&\leq \sup\limits_{\bbeta_n\in\mathcal{C}_n}P_{\bbeta_n}\left((\hat{\bbeta}_{n,\bzeta}-\bbeta_{n,\bzeta})'\tilde{\bX}_{n,\bzeta}'(\bPhi_n\bPhi_n')^{-1}\tilde{\bX}_{n,\bzeta}(\hat{\bbeta}_{n,\bzeta}-\bbeta_{n,\bzeta})\geq \frac{\tilde{C}_4\theta_n^2m_nn}{(\sqrt{m_n}+\sqrt{n}+o(\sqrt{n}))^2}\right)\\
& \leq \sup\limits_{\bbeta_n\in\mathcal{C}_n}P_{\bbeta_n}\left(\chi_{|\bzeta|}^2\geq \frac{\tilde{C}_4\theta_n^2m_n n}{(\sqrt{m_n}+\sqrt{n}+o(\sqrt{n}))^2}\right)
 \leq \sup\limits_{\bbeta_n\in\mathcal{C}_n}P_{\bbeta_n}(\chi_{|\bzeta|}^2\geq (1-\delta)\tilde{C}_4\theta_n^2m_n)\\
&\leq \exp(-\tilde{c}_4m_n\theta_n^2),\:\mbox{for some constant $\tilde{c}_4>0$.}
\end{align*}

\noindent\underline{\textbf{Proof of Condition (ii):}}\\
This is proved using Lemma~\ref{lem2} by suitably choosing $C_1>0$ in the statement of the lemma to be less than $\tilde{c}_4$.
\end{proof}

\noindent\textbf{Proof of Theorem~\ref{consistency_new}}
\begin{proof}
To prove the result, it is enough to show that Assumptions (F) and (G) hold. Since $h(x)\sim x^{-r}$ for large $|x|$, we have
Since $\tilde{C}_1x^{-r}\leq h(x)\leq\tilde{C}_2 x^{-r}$ for sufficiently large $x$, for some constants $\tilde{C}_1,\tilde{C}_2>0$. Thus
$\int_{a_n}^{\infty}h(x/\lambda_n)/\lambda_n dx=\int_{a_n/\lambda_n}^{\infty}h(x) dx\leq\frac{\tilde{C}_2}{r-1}(a_n/\lambda_n)^{-(r-1)} $.
Given that $\lambda_n\leq a_np_n^{-(u'+1)/(r-1)}$ for some $u>0$,
$\frac{\tilde{C}_2}{r-1}(a_n/\lambda_n)^{-(r-1)}\leq \frac{\tilde{C}_2}{r-1}p_n^{-1-u'}\leq p_n^{-1-u}.$ Hence the (F) holds.
Also,
\begin{align*}
&-\log(\inf\limits_{x\in[-M_n,M_n]}h(x/\mu_n)/\mu_n)=-\log(\inf\limits_{x\in[-M_n/\mu_n,M_n/\mu_n]}h(x)/\mu_n)\leq -\log(\tilde{C}_1(M_n/\mu_n)^{-r}/\mu_n)\\
&=-\log(\tilde{C}_1)+r\log(M_n)+(r+1)\log(\mu_n)=O(\log(p_n)),
\end{align*}
verifying assumption (G).
\end{proof}

\bibliographystyle{natbib}
\bibliography{reference_biom}
\end{document}